\newtheorem{theorem}{Theorem}
\newtheorem{proposition}{Proposition}
\begin{document}
%

\title{Compressive Subspace Learning with Antenna Cross-correlations for Wideband Spectrum Sensing}
%
%
%

\author{Tierui~Gong, 
		Zhijia~Yang,
        Meng~Zheng,
        Zhifeng~Liu,
        and Gengshan Wang
\vspace{-0.36cm}
\thanks{
	This work was supported in part by the National Key Research and Development Program of China under grant 2017YFA0700300, in part by the National Natural Science Foundation of China under grant 61673371, in part by the  International Partnership Program of Chinese Academy of Sciences under grant 173321KYSB20180020, and in part by the Liaoning Provincial Natural Science Foundation of China under grant 2019-YQ-09.
}
\thanks{
	T. Gong and G. Wang are with the State Key Laboratory of Robotics, the Key Laboratory of Networked Control Systems, Shenyang Institute of Automation, Chinese Academy of Sciences, Shenyang 110016, China, the Institutes for Robotics and Intelligent Manufacturing, Chinese Academy of Sciences, Shenyang 110169, China, and also with the University of Chinese Academy of Sciences, Beijing 100049, China. (e-mail: gongtierui@sia.cn; wanggengshan@sia.cn).
	
	Z. Yang, M. Zheng, and Z. Liu are with the State Key Laboratory of Robotics, the Key Laboratory of Networked Control Systems, Shenyang Institute of Automation, Chinese Academy of Sciences, Shenyang 110016, China, and the Institutes for Robotics and Intelligent Manufacturing, Chinese Academy of Sciences, Shenyang 110169, China. (e-mail: yang@sia.ac.cn; zhengmeng\_6@sia.cn; liuzhifeng@sia.cn).}
}

%



\maketitle

\begin{abstract}
Compressive subspace learning (CSL) with the exploitation of space diversity has found a potential performance improvement for wideband spectrum sensing (WBSS). However, previous works mainly focus on either exploiting antenna auto-correlations or adopting a multiple-input multiple-output (MIMO) channel without considering the spatial correlations, which will degrade their performances. In this paper, we consider a spatially correlated MIMO channel and propose two CSL algorithms (i.e., mCSLSACC and vCSLACC) which exploit antenna cross-correlations, where the mCSLSACC utilizes an antenna averaging temporal decomposition, and the vCSLACC uses a spatial-temporal joint decomposition. For both algorithms, the conditions of statistical covariance matrices (SCMs) without noise corruption are derived. Through establishing the singular value relation of SCMs in statistical sense between the proposed and traditional CSL algorithms, we show the superiority of the proposed CSL algorithms. By further depicting the receiving correlation matrix of MIMO channel with the exponential correlation model, we give important closed-form expressions for the proposed CSL algorithms in terms of the amplification of singular values over traditional CSL algorithms. Such expressions provide a possibility to determine optimal algorithm parameters for high system performances in an analytical way. Simulations validate the correctness of this work and its performance improvement over existing works in terms of WBSS performance.
\end{abstract}

\begin{IEEEkeywords}
	Compressive subspace learning, wideband spectrum sensing, cognitive radio, MIMO, antenna cross-correlation.
\end{IEEEkeywords}

%
\IEEEpeerreviewmaketitle

\section{Introduction}
%
%
%
%
\IEEEPARstart{T}he demand of a large amount of spectrum resources has been put forward with the tremendous growth of wireless devices and services. However, spectrum access opportunities are limited for new wireless devices because most of the available spectrum resources have been allocated to primary users (PUs) according to a static spectrum allocation policy. A contradictory reality is that the licensed spectrum resources are usually under-utilized by PUs \cite{Kolodzy2002Spectrum, Mchenry2005NSF, Chen2016Survey}, which not only wastes the valuable spectrum resources but also arises a spectrum shortage dilemma for new wireless devices in modern and future wireless communications.
Cognitive radio (CR) \cite{Haykin2005Cognitive}, attracting great research interests in recent years, has been considered as one of the promising techniques to deal with the spectrum shortage dilemma.  When spectrum resources are released by PUs, CR allows secondary users (SUs) to access the licensed spectrum in an opportunistic fashion without introducing any interference to PUs.

As one of the most critical functions to enable CRs, spectrum sensing is used to reliably detect PU activities on the licensed spectrum. 
Traditional narrowband spectrum sensing (NBSS) techniques \cite{Yucek2009Survey,Zheng2016SMCSS,Zheng2017Energy}, relying on the Nyquist sampling, generally fail to satisfy the requirement of CRs to sense a broad frequency range, which is known as the wideband spectrum sensing (WBSS) \cite{Hattab2014Multiband, Ali2017Advances, Hamdaoui2018Compressed,Qin2018Sparse}. 
Benefiting from the compressed sensing theory \cite{Candes2006Robust}, WBSS techniques based on the sub-Nyquist sampling \cite{Mishali2011Sub}
have received significant interests recently. By exploiting the spectrum sparsity, some works apply sparse recovery to obtain the original spectrum or the signals by solving an under-determined linear equation \cite{Khalfi2018Efficient, Qi2018Blind, Zhang2018Autonomous, Wang2018Phased}. Alternatively, the other works recover the power spectral density (PSD) of original signals using least-squares by solving an over-determined linear equation, which relaxes the prior sparsity assumption \cite{Ariananda2012Compressive,Cohen2014Sub,Romero2016Compressive}. Even though advantages of the sub-Nyquist sampling based WBSS techniques have been shown \cite{Sun2013Wideband}, their performances could noticeably deteriorate in practical communication scenarios due to the wireless channel fading.

An effective approach to promote the sensing performance is via the space diversity realized by multiple-input multiple-output (MIMO) systems. The advantages of MIMO systems mainly come from the noise uncorrelation and different fading levels of multiple antennas, which can be combined to combat performance degradation caused by the low channel quality. 
The validation of sensing performance improvement by the MIMO technique has been verified in both NBSS and WBSS scenarios \cite{Taherpour2010Multiple, Yang2011Multi, Ioushua2017CaSCADE}. Additionally, in MIMO systems, spatial correlations exist between antennas due to the integration of multiple antennas in a limited space \cite{Masouros2013Large}. It is notable that spatial correlations help increase the signal similarity between antennas, which greatly contributes to the reconstruction of source signals at the receiver side conditioned on channel fading and background noise. It is thus necessary to consider and analyze such a spatial correlation effect in WBSS of MIMO systems. A recent work \cite{Gong2019Compressed} employs spatial correlations to improve WBSS. Moreover, previous works \cite{Oude2011Lowering,Kitsunezuka2015Cross,Merritt2018High} have demonstrated that the exploitation of antenna cross-correlations can lower the so-called SNR wall. Therefore, the exploitation of antenna cross-correlations in WBSS of MIMO systems has its penitential improvement in the spectrum sensing performance.

On the other hand, subspace learning has been widely investigated and used in many areas of signal processing, and also has found its application {in} performance improvement in spectrum sensing. A combination of MIMO technique and subspace learning has been studied in \cite{Zeng2009Eigenvalue, Li2016Maximum, Liu2017Optimal, Bouallegue2018Blind, Jin2019Spectrum} for the NBSS. A general procedure is with a first eigenvalue computation of the statistical covariance matrix (SCM) and a following PU detection by the obtained eigenvalues. However, these works neither consider spatial correlations of the MIMO channel nor exploit antenna cross-correlations. More important, these works \cite{Zeng2009Eigenvalue, Li2016Maximum, Liu2017Optimal, Bouallegue2018Blind, Jin2019Spectrum} are not suitable for the WBSS.

In recent {years}, compressive subspace learning (CSL) has attracted growing research interests and found its application {in} the sub-Nyquist sampling based WBSS \cite{Ma2016Reliable, Ioushua2017CaSCADE, Zhang2018Distributed}. The CSL is applied to reduce the communication overhead for cooperative WBSS in \cite{Ma2016Reliable} and the computation complexity for cognitive Internet of Things in \cite{Zhang2018Distributed}. {Improvements} in sensing performance by the CSL are verified as well. However, the CSL algorithm at each sensing node in \cite{Ma2016Reliable} does not exploit the space diversity, which will suffer a performance degradation by the wireless channel fading. The CSL algorithm in \cite{Zhang2018Distributed} is in fact an extension of the CSL algorithm in \cite{Ma2016Reliable} to multiple cognitive nodes and space diversity exploitation is however not analyzed. The work in \cite{Ioushua2017CaSCADE} introduces space diversity into CSL for WBSS and the performance improvement has been verified. The authors have considered the line of sight MIMO channel, which is, however, over-simple for wireless communication in practice. Even though exploiting the space diversity, both the mentioned works \cite{Ioushua2017CaSCADE, Zhang2018Distributed} do not consider the spatial correlations and antenna cross-correlations. The work in \cite{Gong2019Compressed} considers the spatially correlated MIMO channel and provides analyses on CSL algorithms with space diversity influence. However, the proposed CSL algorithms only exploit antenna auto-correlations and the corresponding theoretical closed form expressions are not yet provided.

Motivated by the above discussions, this paper proposes two CSL algorithms by exploiting antenna cross-correlations in the spatially correlated MIMO channel. 
We further give the conditions in which  SCMs of the proposed algorithms are not corrupted by noise. In the noise-free conditions, the proposed algorithms have advantages in improving the sensing performance. The superiority and statistical theoretical analyses of the proposed algorithms over CSL algorithms without considering spatial correlations are also given. We can take use of the achieved results to analyze how spatial correlation and the number of deployed antennas influence the performance for given antenna arrays. The analyses show that the spatial correlation indeed yields the performance gain over traditional CSL algorithms. To sum up, the main contributions are listed as follows:
	\begin{itemize}
		\item  
		New algorithms:
		We propose the mCSLSACC (matrix form CSL with the sum of antenna cross-correlations) and the vCSLACC (vector form CSL with antenna cross-correlations) for signal subspace learning by sub-samples. 
		
		\item 
		Performance analysis:  
		We establish the SCM relations based on the Kronecker model \cite{Costa2010Multiple} between the proposed and traditional CSL algorithms, and obtain the singular value relation of SCMs to describe the amplification effect on traditional CSL algorithms.
		Employing the exponential correlation model, we further derive the amplification factor of singular values by the mCSLSACC, and provide the upper and lower bounds of the maximum amplification factor for the vCSLACC.
		
		\item 
		Simulation verification: Simulations verify the correctness of obtained results and show that the proposed CSL algorithms indeed increase the sensing performance of WBSS and outperform traditional CSL algorithms
	\end{itemize}

\emph{Organization:} 
The remainder of this paper is organized as follows. System description and problem formulation are introduced in Section \ref{SectionII}. The mCSLSACC and the  vCSLACC are proposed in Section \ref{SectionIII}. Detailed performance analyses of the two proposed algorithms are investigated in Section \ref{SectionIV} and Section \ref{SectionV}, respectively. Simulations are performed in Section \ref{SectionVII} and conclusions are made in Section \ref{SectionVIII}.

\emph{Notations:} 
Throughout the paper, we use boldface letter (e.g., $\mathbf{x}$) to denote column vectors and boldface upper-case letter (e.g., $\mathbf{X}$) to represent matrices. $\mathbf{I}_{n}$ is an $n \times n$ identity matrix.  Additionally, $(\cdot)^{H}$, $(\cdot)^{T}$ and $(\cdot)^{*}$ represent the Hermitian transpose, transpose and conjugate, respectively. The $[\mathbf{X}]_{:,i:j}$ (or $[\mathbf{x}]_{i:j}$) denotes a sub-matrix of $\mathbf{X}$ (or a sub-vector of $\mathbf{x}$) constituting by columns (or elements) indexed by $i \le m \le j$. $vec(\cdot)$, $diag(\cdot)$, $blkdiag(\cdot)$,  $trace(\cdot)$ and $sv(\cdot)$ stand for a vector stacked by columns of a matrix, a diagonal matrix, a block diagonal matrix, trace of a matrix and a diagonal matrix with singular values as its diagonal elements, respectively. We use $\Vert \cdot \Vert_{F}$ and $\Vert \cdot \Vert_{0,2}$ to denote the Frobenius norm and $l_{0,2}$ norm, respectively. The set of complex number and the complex normal distribution are denoted by $\mathbb{C}$ and $\mathcal{CN}$. $\otimes$, $\mathscr{E}\{ \cdot \}$ and $\triangleq$ denote the Kronecker product, expectation and definition operators, respectively. $\cup$ and $|\cdot|$ stand for the set union and the set cardinality, respectively.

\begin{figure*}[t!]
	\centering
	\includegraphics[height=3.6cm, width=18.0cm]{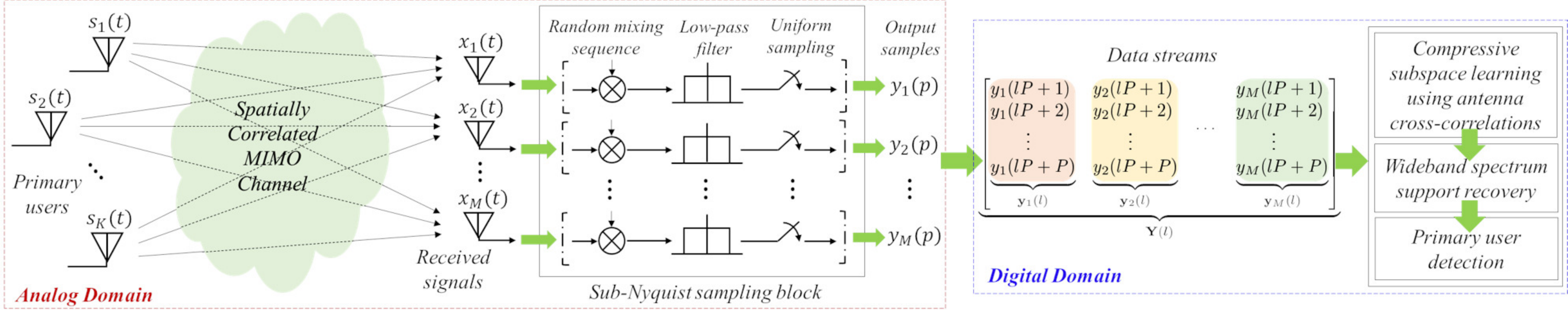}
	\vspace{-0.8em}
	\caption{The considered system architecture}
	\label{TotalSystem}
\end{figure*}

\section{System Description and Problem Formulation}
\label{SectionII}
\subsection{System Description}
Consider an $M$-antenna CR that senses a wideband spectrum utilized by $K$ PUs. Each PU is equipped with one antenna and the PU signal, denoted as $s_{k}(t)$, is scattered in multiple paths to arrive at a receiving antenna, where such multiple scattered paths are equivalent to a single path with a channel gain $g_{mk}(t)$ as shown in {the analog domain {in}} Fig. \ref{TotalSystem}. The channel gain matrix $\mathbf{G} \in \mathbb{C}^{M \times K}$ is defined and assumed to be fixed during a sensing period $T_{s}$. Without loss of generality, the received signal at the $m$th receiving antenna is denoted as $x_{m}(t)$, which is a sum of noise and $K$ faded PU signals. The noise, denoted as $n_{m}(t)$, is assumed a complex additive white Gaussian noise (AWGN) with zero mean and $\sigma^{2}_{m}$ variance, i.e., $n_{m}(t) \sim \mathcal{CN}(0,\sigma^{2}_{m})$. First, we denote $\bar{\mathbf{X}} = [\mathbf{x}_{1},\mathbf{x}_{2},\cdots,\mathbf{x}_{M}]^T \in \mathbb{C}^{M \times Q}$, $\bar{\mathbf{S}} = [\mathbf{s}_{1},\mathbf{s}_{2},\cdots,\mathbf{s}_{K}]^T \in \mathbb{C}^{K \times Q}$ and $\bar{\mathbf{N}} = [\bar{\mathbf{n}}_{1},\bar{\mathbf{n}}_{2},\cdots,\bar{\mathbf{n}}_{M}]^T \in \mathbb{C}^{M \times Q}$ as matrices with each column composed by $Q$ Nyquist samples of $x_{m}(t)$, $s_{k}(t)$ and $n_{m}(t)$, respectively. We then write the received sample matrix as
	\begin{equation}
	\begin{aligned}
		\bar{\mathbf{X}} = \mathbf{G} \bar{\mathbf{S}} + \bar{\mathbf{N}}.
		\label{CRModel}
	\end{aligned}
	\end{equation}
The PU signals $s_{k}(t) (1 \le k \le K)$ are assumed to be independent with each other. The complex AWGN $n_{m}(t) (1 \le m \le M)$ are assumed to be \emph{i.i.d.} and also independent with the PU signals. The considered channels between the CR and PUs are denoted by a spatially correlated MIMO channel model described by the Kronecker model \cite{Costa2010Multiple}, whose channel gain matrix is given by
	\begin{equation}
	\begin{aligned}
		\mathbf{G} = \mathbf{Q}^{\frac{1}{2}} \mathbf{G}_{w} \mathbf{P}^{\frac{1}{2}}, 
		\label{KroneckerModel}
	\end{aligned}
	\end{equation}
where $\mathbf{P}$ and $\mathbf{Q}$ are the transmitting and the receiving correlation matrices, respectively; $\mathbf{G}_{w}$ is a matrix constituted by \emph{i.i.d.} complex Gaussian variables with zero mean and unit variance. The independence between any two PU signals makes the transmitting correlation matrix diagonal, i.e.,  {$\mathbf{P} = diag\{ \sigma_{1}^{2}, \sigma_{2}^{2}, \cdots, \sigma_{K}^{2} \} \in \mathbb{C}^{K \times K}$} with diagonal elements {$\sigma_{k}^{2} (1 \le k \le K)$} being {transmit} powers of PUs. To further characterize the receiving correlation matrix $\mathbf{Q}$, we define the correlation coefficient between the $m_1$th antenna and $m_2$th antenna as $\rho_{m_1 m_2} \in \mathbb{C} (0 \le \left| \rho_{m_1 m_2} \right| \le 1)$. Then, we have $\mathbf{Q} = \left\{\rho_{m_1 m_2} \right\} \in \mathbb{C}^{M \times M} (1 \le m_1, m_2 \le M)$.

The received signals are passed to a multi-antenna sub-Nyquist sampling board to obtain the sub-sampled data streams. The received signal $x_{m}(t)$ of the $m$th antenna is then mixed with a random sequence, low-pass filtered, and uniformly sampled at $f_m$ to output the sub-samples $y_{m}(p)$ as shown in Fig. \ref{TotalSystem}. The hardware configurations of all branches are set the same. The whole process of sub-Nyquist sampling can be represented by a measurement matrix $\mathbf{\Omega} \in \mathbb{C}^{P \times Q}$ with $P$ being the number of sub-samples and $P \ll Q$ \cite{Tropp2009Beyond}. Let $\mathbf{Y} = [\mathbf{y}_{1},\mathbf{y}_{2},\cdots,\mathbf{y}_{M}] \in \mathbb{C}^{P \times M}$ represent the sub-sample matrix of all antennas. Then, the input-output relation can be formulated as
	\begin{equation}
	\begin{aligned}
	\mathbf{Y} = \mathbf{\Omega} \bar{\mathbf{X}}^{T}.
	\label{eqGMCmatrix}
	\end{aligned}
	\end{equation}

\subsection{Problem Formulation}
For sub-Nyquist sampling based WBSS, the wideband spectrum is assumed to be sparse with a total sensed bandwidth limited to $W$ Hz and each bandwidth of a PU signal restricted to $B$ Hz. Each antenna receives a $W$ Hz wideband signal that is a linear superposition of $K$ different PU signals and an AWGN. 
Based on \eqref{eqGMCmatrix} and the frequency domain relation $\bar{\mathbf{X}}^{T} = \mathbf{\Psi} \mathbf{Z} + \bar{\mathbf{N}}^{T}$ with $\mathbf{Z}  = [\mathbf{z}_{1}, \mathbf{z}_{2}, \cdots, \mathbf{z}_{M}]$ being the sparse Fourier coefficient matrix and $\mathbf{\Psi}$ being the discrete Fourier transform matrix, $\mathbf{Y}$ can be expressed in a sparse model as 
	\begin{equation}
	\begin{aligned}
	\mathbf{Y} = \mathbf{A} \mathbf{Z} + \mathbf{N},
	\label{eq_MAGMCmat}
	\end{aligned}
	\end{equation}
with $\mathbf{A} \triangleq \mathbf{\Omega} \mathbf{\Psi}$ and $\mathbf{N} \triangleq \mathbf{\Omega} \bar{\mathbf{N}}^{T}$. Here the AWGN after sub-Nyquist sampling is still the AWGN {yet with the larger noise power. The noise enhancement is induced by noise folding factor that {is defined as} the output-input noise power ratio of a sub-Nyquist sampling system} \cite{Gong2019Property}. We define the spectrum supports $\mathbb{S} = supp(\mathbf{Z})$ as the index set of disjoint narrowbands on which the Fourier coefficients of PU signals are lying. The spectrum supports $\mathbb{S}$ is unknown in prior due to the spectrum mobility of PUs. We aim to simultaneously detect the existing of PU signals and the corresponding spectrum support $\mathbb{S}$ from the sub-samples $\mathbf{Y}$.

Based on \eqref{CRModel} and \eqref{eqGMCmatrix}, $\mathbf{Y}$ can be reformulated as 
	\begin{equation}
	\begin{aligned}
	\mathbf{Y}= \mathbf{\Omega} \bar{\mathbf{S}}^{T} \mathbf{G}^{T} + \mathbf{N}.
	\label{eqMAGMCmat}
	\end{aligned}
	\end{equation}
As seen from \eqref{eqMAGMCmat}, the sub-samples are influenced by both the wireless channel fading and the AWGN.

Aiming to obtain clean sub-samples from the noisy ones by using CSL algorithms, we expect to achieve a better WBSS performance \cite{Ma2016Reliable}. Without prior information on the wireless channel and transmitted signals, we consider the blind recovery problem 
	\begin{equation}
	\begin{aligned}
	\hat{\mathbf{Z}} = \mathop{\arg \min}_{\mathbf{Z}} \Vert \mathbf{Z} \Vert_{0,2},	\text{ s.t. } \Vert \hat{\mathbf{Y}} - \mathbf{A} \mathbf{Z} \Vert_{F}^{2} \le \epsilon,
	\label{blindrec}
	\end{aligned}
	\end{equation}
with $\hat{\mathbf{Y}}$ being the constructed sub-samples and $\epsilon$ being the pre-specified bound for noise. It is noted that the constructed sub-samples can be either a matrix or a vector, which depends on the output of the CSL algorithm (corresponding to the vCSLACC algorithm and the mCSLSACC algorithm in Section \ref{pCSL}). Here, we use the matrix form, namely $\hat{\mathbf{Y}}$, without loss of generality. The matrix form denoting automatically reduces to the vector form if the output of the mCSLSACC algorithm is employed and the matrix $\mathbf{Z}$ correspondingly changes to the vector $\mathbf{z}$. To solve \eqref{blindrec}, we choose a greedy algorithm for the low computational complexity and simultaneous orthogonal matching pursuit (SOMP) \cite{Tropp2005Simultaneous} is properly used to achieve a feasible solution to the blind recovery problem in \eqref{blindrec}. We can alternatively integrate the constructed SCMs in \eqref{eq:33} with the framework in \cite{Ariananda2012Compressive,Cohen2014Sub,Romero2016Compressive} to realize the PSD recovery. However, as the emphasis of this paper lies in the influence analysis of the antenna and channel to the proposed CSL algorithms, we choose to adopt the spectrum recovery in \eqref{blindrec} instead of the PSD recovery.

After the recovery stage, we use the energy detection method to make the final decision based on the recovered spectrum supports $\mathbb{S}$ and Fourier coefficients $\mathbf{Z}$. Specifically, for each $i \in \mathbb{S}$, the test {statistics} is given by $\mathcal{T}_{i} = \frac{1}{M}\sum_{m=1}^{M} |\mathbf{z}_{m}(i)|^{2}$. We define the probability of detection $p_{d}^{i}$ and the probability of false alarm $p_{f}^{i}$ for each spectrum support $i \in \mathbb{S}$, where $p_{d}^{i} = \mathbb{P}(\mathcal{T}_{i} > \gamma | \mathcal{H}_{1}^{i})$ and $p_{f}^{i} = \mathbb{P}(\mathcal{T}_{i} > \gamma | \mathcal{H}_{0}^{i})$ with $\mathcal{H}_{1}^{i}$ and $\mathcal{H}_{0}^{i}$ indicating the presence and absence of PUs on support $i$, $\gamma$ being a pre-defined non-negative threshold and $\mathbb{P}(\cdot)$ denoting the probability. Furthermore, let $\bar{\mathbb{S}}$ denote the complementary set of $\mathbb{S}$ such that $\mathbb{S} \cup \bar{\mathbb{S}}$ exactly consists of the whole index set of disjoint narrowbands. Then, the probabilities of detection and false alarm in terms of the spectrum support set $\mathbb{S}$ are defined as the averages of $p_{d}^{i}$ and $p_{f}^{i}$ over $\mathbb{S}$ and $\bar{\mathbb{S}}$, respectively, i.e.,
\begin{equation}
\begin{aligned}
	P_d = \frac{1}{|\mathbb{S}|} \sum\nolimits_{i \in \mathbb{S} } p_{d}^{i}
	\text{ and }
	P_f  = \frac{1}{|\bar{\mathbb{S}}|} \sum\nolimits_{i \in \bar{\mathbb{S}} } p_{f}^{i}.
	\label{PDPF}
\end{aligned}
\end{equation}
The entire process of the CSL based WBSS can be seen from the digital domain {in} Fig. \ref{TotalSystem}.


\section{The Proposed CSL Algorithms}
\label{SectionIII}

Let $\mathscr{A}$ be an operation set, whose elements are to realize specified arrangements of sub-samples. Thus, $\mathscr{A}(\mathbf{Y})$ represents the reorganized sub-samples after the operation $\mathscr{A}$. Given two sub-sample matrices $\mathbf{Y}_{1}$ and $\mathbf{Y}_{2}$ obtained from two antenna sub-arrays, the SCM in terms of $\mathscr{A}$ is further given by $\mathbf{R}(\mathbf{Y}_{1},\mathbf{Y}_{2},\mathscr{A}) \triangleq \mathscr{E}\{ \mathscr{A}(\mathbf{Y}_{1}) \mathscr{A}^{H} (\mathbf{Y}_{2}) \}$. This SCM is generally low-rank with several dominant singular values and can be approximated by a compact subspace. We first denote the number of dominant singular values as $s$ which is unknown in advance. The main aim of CSL is to find such a compact subspace by solving the following problem 
	\begin{equation}
	\begin{aligned}
		\tilde{\mathbf{U}} = \arg \min_{\mathbf{U} \in \mathbb{C}^{P \times s}}& \| (\mathbf{I} - \mathbf{U} \mathbf{U}^{H}) \mathbf{R}(\mathbf{Y}_{1},\mathbf{Y}_{2},\mathscr{A}) \|_{F}^{2} \\
		s.t. \quad&  \mathbf{U}^{H} \mathbf{U} = \mathbf{I}.
	\label{optimization}
	\end{aligned}
	\end{equation}
We then directly obtain a simple solution $\tilde{\mathbf{U}}$ by performing the singular value decomposition (SVD) to $\mathbf{R}(\mathbf{Y}_{1},\mathbf{Y}_{2},\mathscr{A})$ and then picking columns of the unitary matrix corresponding to the $s$ largest singular values in an adaptive manner. For different CSL algorithms, the $\mathbf{R}(\mathbf{Y}_{1},\mathbf{Y}_{2},\mathscr{A})$s are different, leading to the fact that different CSL algorithms have different values of $s$. The proposed CSL algorithms in Section~\ref{pCSL} also follow such a fact.

Before we formally propose new CSL algorithms, the traditional CSL algorithms\cite{Ioushua2017CaSCADE, Zhang2018Distributed,Ma2016Reliable,Zeng2009Eigenvalue, Li2016Maximum, Liu2017Optimal, Bouallegue2018Blind, Jin2019Spectrum} neither considering spatially correlated MIMO channel nor using antenna cross-correlations are first described in a unified model in order to make the following analyses clear. For spatially uncorrelated MIMO channel, the receiving correlation matrix of Kronecker model \eqref{KroneckerModel} reduces to an identity matrix, resulting in $\mathbf{G} = \mathbf{G}_{w} \mathbf{P}^{\frac{1}{2}}$. Without using antenna cross-correlations, in other words, traditional CSL algorithms only exploit antenna auto-correlations ($\mathbf{Y}_{1}$ and $\mathbf{Y}_{2}$ obtained from the same antenna sub-arrays, i.e., $\mathbf{Y}_{1} = \mathbf{Y}_{2}$). We can reformulate the SCMs of traditional CSL algorithms in a unified form as $\mathbf{R}(\mathbf{Y},\mathscr{A})$ ($\mathbf{Y} \triangleq \mathbf{Y}_{1} = \mathbf{Y}_{2}$). The compact subspace of $\mathbf{R}(\mathbf{Y},\mathscr{A})$ can be achieved via using the SVD to $\mathbf{R}(\mathbf{Y},\mathscr{A})$.

\begin{figure*}[t!]
	\centering
	\includegraphics[height=3.6cm, width=18.0cm]{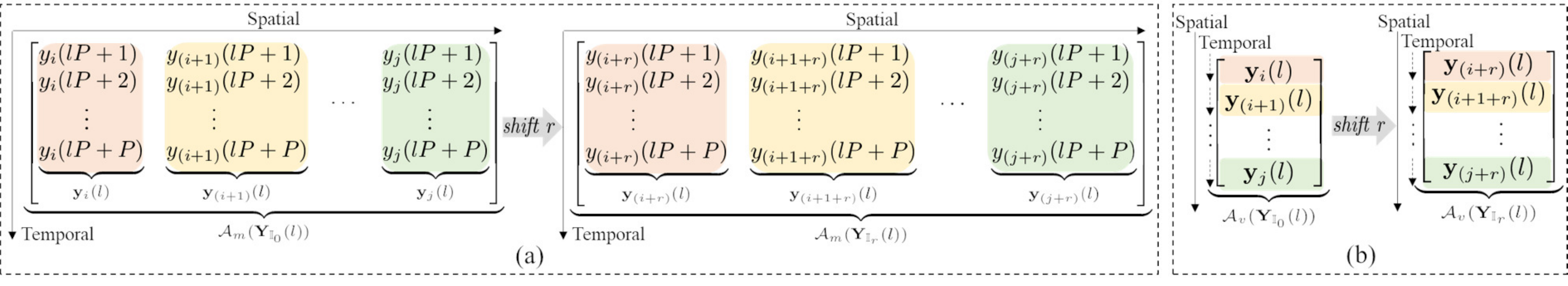}
	\vspace{-0.8em}
	\caption{The considered sub-sample arrangements: (a) matrix form, (b) vector form.}
	\label{MFVF}
\end{figure*}

\subsection{The Proposed mCSLSACC and vCSLACC Algorithms}
\label{pCSL}
We propose to exploit antenna cross-correlations and {consider} two different sub-sample operations ($\mathcal{A}_{m}$, $\mathcal{A}_{v} \in \mathscr{A}$) to realize the matrix and vector arrangements, respectively. Based on this idea, two CSL algorithms (i.e., mCSLSACC and vCSLACC) are proposed. We assume that in a sensing period each sub-sample data stream of an antenna is divided into $L$ segments with each segment $P$ sub-samples. In addition, we introduce an integer parameter $r (0 \le r \le M-1)$ as the shift factor of antenna array to indicate the index shifting of a sub-array compared with another sub-array. Among all $M$ antennas, we select a consecutive sub-array indexed by $\mathbb{I}_{0} \triangleq \{i, i+1, \cdots, j\} (1 \le i \le j \le M-r)$ to generate the basic sub-sample group. Then another consecutive sub-array indexed by $\mathbb{I}_{r} \triangleq \{i+r, i+1+r, \cdots, j+r\}$ produces the shifted sub-sample group. When $r = 0$, it is the antenna auto-correlation case, which is also considered for completeness. Without loss of generality, we consider the $l$th $(l \in [1,L])$ segment and {give the following equations without including index $l$}. The specific form by $\mathcal{A}_{m}$ shown in Fig. \ref{MFVF}(a) is given by
	\begin{flalign}
	\label{MatrixArrange1}
	\mathcal{A}_{m}(\mathbf{Y}_{\mathbb{I}_{0}}) &= \mathbf{Y}_{\mathbb{I}_{0}} = [\mathbf{Y}]_{:, i:j} \\
	\label{MatrixArrange2}
	\mathcal{A}_{m}(\mathbf{Y}_{\mathbb{I}_{r}}) &= \mathbf{Y}_{\mathbb{I}_{r}} = [\mathbf{Y}]_{:, (i+r):(j+r)}, 
	\end{flalign}
where $\mathcal{A}_{m}(\mathbf{Y}_{\mathbb{I}_{0}}), \mathcal{A}_{m}(\mathbf{Y}_{\mathbb{I}_{r}}) \in \mathbb{C}^{P \times (j-i+1)}$. For each $r$, we define $\mathbf{R}(\mathbf{Y},\mathcal{A}_{m},\mathbb{I}_{0},r) \triangleq \mathscr{E}\{\mathcal{A}_{m}(\mathbf{Y}_{\mathbb{I}_{0}})$ $\mathcal{A}_{m}^{H}(\mathbf{Y}_{\mathbb{I}_{r}})\}$ as an individual SCM. By collecting all individual cross-correlations to form a combined SCM $\mathbf{R}_{Y_c} \triangleq \sum_{r = 1}^{i-1} \mathbf{R}(\mathbf{Y},\mathcal{A}_{m},\mathbb{I}_{0},-r) + \sum_{r = 1}^{M-j} \mathbf{R}(\mathbf{Y},\mathcal{A}_{m},\mathbb{I}_{0},r)$ and learning the signal subspace from $\mathbf{R}_{Y_c}$, we derive the mCSLSACC. On the other hand, the vCSLACC is developed by learning signal subspace from {a} defined SCM $\mathbf{R}(\mathbf{Y},\mathcal{A}_{v},\mathbb{I}_{0},r) \triangleq \mathscr{E}\{\mathcal{A}_{v}(\mathbf{Y}_{\mathbb{I}_{0}}) \mathcal{A}_{v}^{H}(\mathbf{Y}_{\mathbb{I}_{r}})\}$, in which the specific form by $\mathcal{A}_{v}$ shown in Fig. \ref{MFVF}(b) is given by
	\begin{flalign}
	\label{VectorArrange1}
	\mathcal{A}_{v}(\mathbf{Y}_{\mathbb{I}_{0}}) &= \mathbf{y}_{\mathbb{I}_{0}} = vec\left([\mathbf{Y}]_{:, i:j}\right) \\
	\label{VectorArrange2}
	\mathcal{A}_{v}(\mathbf{Y}_{\mathbb{I}_{r}}) &= \mathbf{y}_{\mathbb{I}_{r}} = vec\left([\mathbf{Y}]_{:, (i+r):(j+r)}\right),
	\end{flalign}
where $\mathcal{A}_{v}(\mathbf{Y}_{\mathbb{I}_{0}}), \mathcal{A}_{v}(\mathbf{Y}_{\mathbb{I}_{r}}) \in \mathbb{C}^{P(j-i+1) \times 1}$. The operation by $\mathcal{A}_{v}$ stacks sub-samples of each antenna into a vector in an increasing order from the $i$th ($i+r$th) to the $j$th ($j+r$th) antenna.

As stated in the solving process of problem \eqref{optimization}, the signal subspace is obtained via an SVD, which is given by
	\begin{equation}
	\mathbf{R}(\mathbf{Y},\mathcal{A},\mathbb{I}_{0},r) = [\mathbf{U}_{s}, \mathbf{U}_{n}] 
	blkdiag\{ \mathbf{D}_{s}, \mathbf{D}_{n} \}
	[\mathbf{U}_{s}, \mathbf{U}_{n}]^{H},
	\label{eq:32}
	\end{equation}
where $\mathcal{A} \in \{\mathcal{A}_{m}, \mathcal{A}_{v}\}$; $\mathbf{U}_{s}$ and $\mathbf{U}_{n}$ are matrices constituted by {eigenvectors corresponding to} signal and noise, respectively; $\mathbf{D}_{s} \triangleq diag\{\lambda_{1}, \lambda_{2}, \cdots, \lambda_{s}\}$ and $\mathbf{D}_{n} \triangleq diag\{\lambda_{s+1}, \lambda_{s+2}, \cdots, \lambda_{s+n}\}$ are diagonal matrices with singular values {indicating signal and noise} as their diagonal elements, respectively. The signal subspace $\mathbf{U}_{s} = [\mathbf{u}_{1}, \mathbf{u}_{2}, \cdots, \mathbf{u}_{s}]$ is determined by finding the $s$ largest singular values and the remaining $n$ singular values are used for noise subspace extraction, {thereby resulting in a signal subspace projection $\mathbf{P}_{s} = \mathbf{U}_{s} \mathbf{U}_{s}^{H}$}. We can construct the SCM and sub-samples via the obtained signal subspace as
\begin{flalign}
    \label{eq:33}
    \hat{\mathbf{R}} = \mathbf{P}_{s} \mathbf{R} \quad \text{and} \quad
    \hat{\mathbf{y}}_{c} = \mathbf{U}_{s} \boldsymbol{\lambda}_{s},
\end{flalign}
where $\boldsymbol{\lambda}_{s} \triangleq [\lambda_{1}, \lambda_{2}, \cdots, \lambda_{s}]^{T}$. In practice, we estimate the SCM from finite sub-samples by averaging $L$ cross-correlations. The proposed algorithms are listed in Algorithm \ref{alg1} and Algorithm \ref{alg2}, respectively.

\begin{algorithm}[!t]
	\caption{mCSLSACC algorithm}
	\label{alg1}
	\footnotesize
	\KwIn{$i$, $j$, $\{\mathbf{y}_{m}(l)\}_{m = 1 \cdots M, l = 1 \cdots L}$}
	\KwOut{$\hat{\mathbf{y}}_{c}$}
		\vspace{.05cm}
		$\mathbb{I}_{0} \gets \{i, i+1, \cdots, j\}$ \tcp*{Get the index set of basic antenna sub-array}
		$\mathbb{I}_{r} \gets \{i+r, i+1+r, \cdots, j+r\}$ \tcp*{Get the index set of shifted antenna sub-array}
		$\mathbf{Y}(l) \gets \left[ \mathbf{y}_{1}(l), \mathbf{y}_{2}(l) \cdots, \mathbf{y}_{M}(l) \right]$ \tcp*{Stack sub-samples into a matrix} 
		$\hat{\mathbf{R}}(\mathbf{Y},\mathcal{A}_{m},\mathbb{I}_{0},r) \gets \frac{1}{L} \sum_{l=1}^{L} \mathcal{A}_{m}(\mathbf{Y}_{\mathbb{I}_{0}}(l)) \mathcal{A}_{m}^{H}(\mathbf{Y}_{\mathbb{I}_{r}}(l))$ \tcp*{Estimate individual SCMs over $L$ segments}
		$\hat{\mathbf{R}}_{Y_c} \gets \sum\nolimits_{r = 1}^{i-1} \hat{\mathbf{R}}(\mathbf{Y},\mathcal{A}_{m},\mathbb{I}_{0},-r) + \sum\nolimits_{r = 1}^{M-j} \hat{\mathbf{R}}(\mathbf{Y},\mathcal{A}_{m},\mathbb{I}_{0},r)$ \tcp*{Get a combined SCM using individual SCMs}
		$[\mathbf{U}_{s}, \mathbf{U}_{n}] blkdiag\{ \mathbf{D}_{s}, \mathbf{D}_{n} \} [\mathbf{U}_{s}, \mathbf{U}_{n}]^{H} \gets \text{SVD}(\hat{\mathbf{R}}_{Y_c})$ \tcp*{Perform SVD to the combined SCM $\hat{\mathbf{R}}_{Y_c}$}
		$\hat{\mathbf{y}}_{c} \gets \mathbf{U}_{s} \cdot [\lambda_{1}, \lambda_{2}, \cdots, \lambda_{s}]^{T}$ \tcp*{Construct approximately clean sub-samples from the signal subspace}
		\textbf{return} $\hat{\mathbf{y}}_{c}$ 
\end{algorithm}

It is worth noting that to learn subspaces from {$\mathscr{E}\{ \mathcal{A}(\mathbf{Y}_{\mathbb{I}_{0}}) \mathcal{A}^{H}(\mathbf{Y}_{\mathbb{I}_{r}}) \}$} is equivalent to learn the subspaces from $\mathscr{E}\{ \mathcal{A}(\mathbf{Y}_{\mathbb{I}_{r}}) \mathcal{A}^{H}(\mathbf{Y}_{\mathbb{I}_{0}}) \}$ {for} $\mathcal{A} \in \{\mathcal{A}_{m}, \mathcal{A}_{v}\}$. Therefore, we only take {the former} into consideration.
In addition, we summarize the main differences of two proposed algorithms as follows:
\begin{itemize}
    \item For all different values of $r$, the mCSLSACC corresponds to the case of only one input-output vector pair for its decomposed SCM being a combined result over $r$, while the vCSLACC includes $r$ input-output vector pairs. 
    In addition, the number of dominant singular values (i.e., $s$) are different for both algorithms.
    
    \item 
    The basic SCM used to constitute {$\mathbf{R}_{Y_c}$ in the} mCSLSACC is a sum of individual $\mathscr{E}\{ \mathbf{y}_{m} \mathbf{y}_{m+r}^{H} \}$ $(i \le m \le j)$ with equal weights over $j-i+1$ antennas. Therefore, the signal subspace is obtained as a superposition result and its bases expands on the time dimension. {Nevertheless}, the SCM in the vCSLACC is constituted by sub-matrices $\mathscr{E}\{ \mathbf{y}_{m_{1}} \mathbf{y}_{m_{2}}^{H} \} (i \le m_1 \le j, i+r \le m_2 \le j+r)$, leading to a spatial-temporal joint decomposition.
\end{itemize}

\begin{algorithm}[!t]
	\caption{vCSLACC algorithm}
	\label{alg2}
	\footnotesize
	\KwIn{$i$, $j$, $r_{0}$, $\{\mathbf{y}_{m}(l)\}_{m = 1 \cdots M, l = 1 \cdots L}$}
	\vspace{-.03cm}
	\KwOut{$\hat{\mathbf{Y}}$} 
	\vspace{.01cm}
	$\mathbb{I}_{0} \gets \{i, i+1, \cdots, j\}$, \tcp*{Get the index set of basic antenna sub-array}
	$r \gets r_{0}$; 
	$\mathbb{I}_{r} \gets \{i+r, i+1+r, \cdots, j+r\}$ \tcp*{Get the index set of shifted antenna sub-array}
	$\mathbf{Y}(l) \gets \left[ \mathbf{y}_{1}(l), \mathbf{y}_{2}(l) \cdots, \mathbf{y}_{M}(l) \right]$ \tcp*{Stack the sub-samples into a matrix}
	$\hat{\mathbf{R}}(\mathbf{Y},\mathcal{A}_{v},\mathbb{I}_{0},r) \gets \frac{1}{L} \sum_{l=1}^{L} \mathcal{A}_{v}(\mathbf{Y}_{\mathbb{I}_{0}}(l)) \mathcal{A}_{v}^{H}(\mathbf{Y}_{\mathbb{I}_{r}}(l))$ \tcp*{Estimate the SCMs over $L$ segments}
	$[\mathbf{U}_{s}, \mathbf{U}_{n}] blkdiag\{ \mathbf{D}_{s}, \mathbf{D}_{n} \} [\mathbf{U}_{s}, \mathbf{U}_{n}]^{H} \gets \text{SVD}(\hat{\mathbf{R}}(\mathbf{Y},\mathcal{A}_{v},\mathbb{I}_{0},r))$ \tcp*{Perform SVD to SCM $\hat{\mathbf{R}}(\mathbf{Y},\mathcal{A}_{v},\mathbb{I}_{0},r)$}
	$\hat{\mathbf{y}}_{c} \gets \mathbf{U}_{s} \cdot [\lambda_{1}, \lambda_{2}, \cdots, \lambda_{s}]^{T}$ \tcp*{Construct approximately clean sub-samples from the signal subspace}
	$\hat{\mathbf{Y}} \gets reshape(\hat{\mathbf{y}}_{c}, [P, j-i+1])$ \tcp*{Reshape the sub-sample vector into a matrix}
	\textbf{return} $\hat{\mathbf{Y}}$
\end{algorithm}

\section{Matrix Form CSL with The Sum of Antenna Cross-correlations}
\label{SectionIV}
In this section, we establish the SCM relations between the proposed mCSLSACC with traditional CSL algorithms and then analyze the space diversity and shift factor of the antenna array through the SCM relations. Since the SCM in mCSLSACC is formed by collecting the sum of cross-correlations over $r$, we investigate the mCSLSACC via its basic part, i.e., for a fixed $r$, we investigate the subspace learning based on $\mathbf{R}(\mathbf{Y},\mathcal{A}_{m},\mathbb{I}_{0},r) = \mathscr{E}\{\mathcal{A}_{m}(\mathbf{Y}_{\mathbb{I}_{0}}) \mathcal{A}_{m}^{H}(\mathbf{Y}_{\mathbb{I}_{r}})\}$, which we refer to as the mCSLACC (i.e., matrix form CSL with antenna cross-correlations). The mCSLSACC will be presented based on the results of the mCSLACC.

\subsection{Statistical Covariance Matrix}
\label{FOURA}
To {establish} the SCM of the mCSLACC, we arrange the corresponding Nyquist samples $\mathbf{s}_{k} = [s_{k}(lQ+1), s_{k}(lQ+2), \cdots, s_{k}(lQ+Q)]^{T}$ and $\mathbf{c}_{m} = [c_{m}(lQ+1), c_{m}(lQ+2), \cdots, c_{m}(lQ+Q)]^{T}, 1 \le m \le M, c \in \{ x, n \}$ into matrices as 
	\begin{equation}
	\begin{aligned}
	\mathbf{S} &= [\mathbf{s}_{1}, \mathbf{s}_{2}, \cdots, \mathbf{s}_{K}],\\
	\mathbf{C} &= [\mathbf{c}_{1}, \mathbf{c}_{2}, \cdots, \mathbf{c}_{M}],
	\label{MatrixForm}
	\end{aligned}
	\end{equation}
where $\mathbf{c} \in \{ \mathbf{x}, \mathbf{n} \}$, $\mathbf{C} \in \{\mathbf{X}, \mathbf{N}\}$.
Defining $\mathbf{C}_{\mathbb{I}_{0}} \triangleq [\mathbf{C}]_{:, i:j}$, $\mathbf{C}_{\mathbb{I}_{r}} \triangleq [\mathbf{C}]_{:, (i+r):(j+r)}$ and following \eqref{eqMAGMCmat}, we get 
	\begin{flalign}
    \label{MatRelationship_1}
    \mathbf{Y}_{\mathbb{I}_{0}} &= \mathbf{\Omega} \mathbf{S} \mathbf{G}_{\mathbb{I}_{0}}^{T} + \mathbf{N}_{\mathbb{I}_{0}} \\
    \label{MatRelationship_2}
    \mathbf{Y}_{\mathbb{I}_{r}} &= \mathbf{\Omega} \mathbf{S} \mathbf{G}_{\mathbb{I}_{r}}^{T} + \mathbf{N}_{\mathbb{I}_{r}},
    \end{flalign}
where $\mathbf{G}_{\mathbb{I}_{0}}^{T} \triangleq [\mathbf{G}^{T}]_{:,i:j}$ and $\mathbf{G}_{\mathbb{I}_{r}}^{T} \triangleq [\mathbf{G}^{T}]_{:,(i+r):(j+r)}$.
Then, $\mathbf{R}(\mathbf{Y},\mathcal{A}_{m},\mathbb{I}_{0},r)$ can be further given by
	\begin{flalign}
	\nonumber
	\mathbf{R}(\mathbf{Y},\mathcal{A}_{m},\mathbb{I}_{0},r) 
	&= \mathbf{\Omega} \sum_{k=1}^{K} [\mathbf{g}_{k}^{T}]_{\mathbb{I}_{0}} [\mathbf{g}_{k}^{*}]_{\mathbb{I}_{r}} \mathbf{R}_{s_{k}} \mathbf{\Omega}^{H} \\
	\label{MatYS}
	&+ \mathbf{R}(\mathbf{N},\mathcal{A}_{m},\mathbb{I}_{0},r),
	\end{flalign}
where $\mathbf{g}_{k}$ is {denoted as} the $k$th column of $\mathbf{G}$ with $[\mathbf{g}_{k}^{T}]_{\mathbb{I}_{0}} \triangleq [\mathbf{g}_{k}^{T}]_{i:j}$ and $[\mathbf{g}_{k}^{T}]_{\mathbb{I}_{r}} \triangleq [\mathbf{g}_{k}^{T}]_{(i+r):(j+r)}$; $\mathbf{R}_{s_{k}} \triangleq \mathscr{E}\left\{ \mathbf{s}_{k}\mathbf{s}_{k}^{H} \right\}$ and $\mathbf{R}(\mathbf{N},\mathcal{A}_{m},\mathbb{I}_{0},r) \triangleq \mathscr{E}\{\mathcal{A}_{m}(\mathbf{N}_{\mathbb{I}_{0}}) \mathcal{A}_{m}^{H}(\mathbf{N}_{\mathbb{I}_{r}})\}$. \eqref{MatYS} is derived by following the fact that $\mathscr{E}\left\{ \mathbf{s}_{k_{1}} \mathbf{s}_{k_{2}}^{H} \right\} = \mathbf{0}$ when $k_{1} \ne k_{2}$. For the noise term $\mathbf{R}(\mathbf{N},\mathcal{A}_{m},\mathbb{I}_{0},r)$, discussions are listed as follows:
\begin{itemize}
    \item 
    {At} $r = 0$, the noise term {comes from an auto-correlation of $\mathbf{N}_{\mathbb{I}_{0}}$, thereby leading to the {only use} of antenna auto-correlations.} It, thus, becomes a diagonal matrix $\sigma_{N_{\mathbb{I}_{0}}}^{2} \mathbf{I}_{P}$ with $\sigma_{N_{\mathbb{I}_{0}}}^{2} \triangleq \sum_{m \in \mathbb{I}_{0}} \alpha_{m} \sigma_{m}^{2}$ and $\alpha_{m}$ being the noise folding factor. {This condition leads to a noise {corruption to} $\mathbf{R}(\mathbf{Y},\mathcal{A}_{m},\mathbb{I}_{0},r)$.}
    \item
    When $r \neq 0$, the noise term is obtained based on the cross-correlation between $\mathbf{N}_{\mathbb{I}_{0}}$ and $\mathbf{N}_{\mathbb{I}_{r}}$, allowing the exploitation of antenna cross-correlations without {being faced with} the noise corruption at $r = 0$. This reveals that $\mathbf{R}(\mathbf{Y},\mathcal{A}_{m},\mathbb{I}_{0},r)$ is not interfered by noise for $r \neq 0$.
\end{itemize}
{Throughout the above discussions, we can exploit the second condition to improve the signal subspace learning in the low SNR scenario.}

To establish the SCM relation between the basic mCSLACC and traditional CSL algorithms for further analysis, both SCMs in statistical sense w.r.t. the wireless fading channel are investigated. Following the Kronecker model in \eqref{KroneckerModel}, we {obtain $\mathbf{g}_{k} = \sigma_{k} \mathbf{Q}^{\frac{1}{2}} \bar{\mathbf{g}}_{k}$}, where {$\sigma_{k}$} is the $k$th diagonal element of $\mathbf{P}^{\frac{1}{2}}$ and $\bar{\mathbf{g}}_{k}$ is the $k$th column of matrix $\mathbf{G}_{w}$ whose elements follow the Rayleigh distribution with zero mean and unit variance. Let $\mathbf{q}_{m} (1 \le m \le M)$ be columns of $\mathbf{Q}^{\frac{1}{2}}$. Then we get 
	\begin{flalign}
	\nonumber
	\mathscr{E}\left\{ [\mathbf{g}_{k}^{T}]_{\mathbb{I}_{0}} [\mathbf{g}_{k}^{*}]_{\mathbb{I}_{r}} \right\}
		&= \sigma_{k}^{2} \sum_{m=1}^{M} \left[\mathbf{q}_{m}^{T} \right]_{\mathbb{I}_{0}} \left[\mathbf{q}_{m}^{*} \right]_{\mathbb{I}_{r}} \\
		\label{MatGain}
		&= \sigma_{k}^{2} \sum_{u \in \mathbb{I}_{0}} \mathbf{q}_{u}^{H} \mathbf{q}_{u+r},
	\end{flalign}
{where} $\mathscr{E}\left\{\bar{g}_{m_1k} \bar{g}_{m_2k}^{*}\right\} = 0, \forall m_1 \ne m_2$ and $\mathscr{E}\left\{ \left| \bar{g}_{mk} \right|^{2} \right\} = 1, \forall m$ are used for the derivation. The last equality {in \eqref{MatGain} follows} the Hermitian structure of $\mathbf{Q}^{\frac{1}{2}}$.

Taking expectation to both sides of \eqref{MatYS} and using {\eqref{MatGain}}, we obtain
    \begin{flalign}
	\label{MatYS2}
	\bar{\mathbf{R}}(\mathbf{Y},\mathcal{A}_{m},\mathbb{I}_{0},r) 
	= \sum_{u \in \mathbb{I}_{0}} {\mathbf{q}_{u}^{H} \mathbf{q}_{u+r}}  \bar{\mathbf{R}}_{sa} + \mathbf{R}(\mathbf{N},\mathcal{A}_{m},\mathbb{I}_{0},r),
	\end{flalign}
where $\bar{\mathbf{R}}(\mathbf{Y},\mathcal{A}_{m},\mathbb{I}_{0},r) \triangleq \mathscr{E}_{\mathbf{g}_{k}} \left\{ \mathbf{R}(\mathbf{Y},\mathcal{A}_{m},\mathbb{I}_{0},r) \right\}$ represents the SCM in statistical sense w.r.t. the wireless fading channel and $\bar{\mathbf{R}}_{sa} \triangleq \mathbf{\Omega} \sum_{k=1}^{K} {\sigma_{k}^{2}} \mathbf{R}_{s_{k}} \mathbf{\Omega}^{H}$. The matrix $\bar{\mathbf{R}}_{sa}$ coincides with the SCM in statistical sense of traditional CSL algorithms considering the MIMO channel without spatial correlations ({i.e., by setting $\mathbf{g}_{k} = {\sigma_{k}} \mathbf{I}_{M} \bar{\mathbf{g}}_{k}$}).
{It is noted from \eqref{MatYS2} that the mCSLACC introducs $G \triangleq \sum\nolimits_{u \in \mathbb{I}_{0}} {\mathbf{q}_{u}^{H} \mathbf{q}_{u+r}}$ over traditional CSL algorithms in statistical sense in terms of SCM.}

Based on the derived relation in \eqref{MatYS2}, we {next} show how the spatially correlated MIMO channel and the shift factor of antenna sub-arrays influence the performance of {the} mCSLACC by comparing it with traditional CSL algorithms. The comparison is based on a relation between singular values of two SCMs. Such a singular value relation helps us understand the amplification introduced by {the} mCSLACC, where the large amplification on singular values allows an accurate discrimination from noise than traditional CSL algorithms.

\subsection{Analysis on Space Diversity and Shift Factor}
We define singular value matrices $\mathbf{D}(\mathbf{Y},\mathcal{A}_{m},\mathbb{I}_{0},r) \triangleq sv \left( \bar{\mathbf{R}}(\mathbf{Y},\mathcal{A}_{m},\mathbb{I}_{0},r) \right)$ and $\mathbf{D}_{sa} \triangleq sv\left(\bar{\mathbf{R}}_{sa}\right)$, {allowing the derivation of {the} singular value relation {in} \eqref{MatYS2} as}
	\begin{flalign}
	\label{MatEigRelation}
	\mathbf{D}(\mathbf{Y},\mathcal{A}_{m},\mathbb{I}_{0},r) 
	&=\left\{ \begin{aligned}
	&\sum\limits_{u \in \mathbb{I}_{0}} {\mathbf{q}_{u}^{H} \mathbf{q}_{u}} \mathbf{D}_{sa} + \sigma_{N_{\mathbb{I}_{0}}}^{2} \mathbf{I}_{P},&  &r = 0 \\
	&\left| \sum\limits_{u \in \mathbb{I}_{0}} {\mathbf{q}_{u}^{H} \mathbf{q}_{u+r}} \right| \mathbf{D}_{sa},& &r \ne 0. \\
	\end{aligned}  \right.
	\end{flalign}
We note that the singular values in $\mathbf{D}(\mathbf{Y},\mathcal{A}_{m},\mathbb{I}_{0},r)$ are amplified results of the singular values in $\mathbf{D}_{sa}$ by an amplification factor $|G| \triangleq \left| \sum\nolimits_{u \in \mathbb{I}_{0}} {\mathbf{q}_{u}^{H} \mathbf{q}_{u+r}} \right|$.

In order to provide more insights on the derived amplification factor, the exponential correlation model is used to describe the receiving correlation matrix, whose specific form is given by
	\begin{equation}
	\begin{aligned}
	\mathbf{Q}
	=\begin{bmatrix}
	1 	 	   &\rho     &\cdots 	&\rho^{M-1} \\
	\rho^{*}  	 	   &1  	       &\cdots     &\rho^{M-2} \\
	\vdots 	   &\vdots        &\ddots 	&\vdots \\
	(\rho^{*})^{M-1}   &(\rho^{*})^{M-2}  &\cdots 	&1 \\
	\end{bmatrix},
	\label{ECM}
	\end{aligned}
	\end{equation}
where $\rho$ is a coefficient (known as correlation coefficient) characterizing the correlation between adjacent antennas. This model {shows that the correlations are the same for equally spaced antennas and are exponentially decayed for increasingly spaced antennas.} {{With such an} exponential correlation model} and the equation $\mathbf{Q}^{\frac{1}{2}} (\mathbf{Q}^{\frac{1}{2}})^{H} = \mathbf{Q}$, we obtain the correlation between the $m_1$th and $m_2$th columns of $\mathbf{Q}^{\frac{1}{2}}$ as
	\begin{flalign}
	\label{columncorr}
	\mathbf{q}_{m_1}^{H} \mathbf{q}_{m_2} 
	= \left\{ \begin{aligned}
	&\rho^{m_1-m_2},& 			&m_1<m_2, \\
	&1,&  						&m_1=m_2, \\
	&(\rho^{*})^{m_2-m_1},& 	&m_1>m_2. \\
	\end{aligned}  \right.
	\end{flalign}
{Accordingly}, the amplification factor $|G|$ is obtained based on \eqref{columncorr} as
	\begin{flalign}
	\label{svGain}
	\left| G \right| 
	=\left| \sum\limits_{u \in \mathbb{I}_{0}} {\mathbf{q}_{u}^{H} \mathbf{q}_{u+r}} \right|
	= (j-i+1)|\rho|^{r}.
	\end{flalign}
As in \eqref{svGain}, the $|G|$ reduces to $j-i+1$ {at} $r=0$, which is equivalent to traditional CSL algorithms considering spatially uncorrelated MIMO channel. Dividing the noise term $\sigma_{N_{\mathbb{I}_{0}}}^{2} \mathbf{I}_{P}$ by $j-i+1$, the mCSLACC can be considered as an antenna averaging scheme, which makes the mCSLACC robust against noise. When $r \ne 0$, the $|G|$ reduces by the correlation coefficient $|\rho|$ with an increasing $r$, which reveals that the amplification effect by antenna cross-correlations becomes weak with the increasing of antenna distances. However, the superiority of mCSLACC at $r \ne 0$ over that at $r = 0$ is attributed to the lack of noise corruption. Moreover, in the case of $r \ne 0$, a large correlation coefficient indeed promotes an improved amplification effect, while in the case of $r = 0$, the fact $|\rho|^{0} = 1$ shows that $|G|$ does not change with varying correlation coefficients. As expected, both cases are influenced by the number of deployed antennas $j-i+1$, where both of their performances improve with the increasing $j-i+1$.

\subsection{Space Diversity and Shift Factor for The mCSLSACC}
As previously discussed, we can tell from \eqref{MatEigRelation} that, for each $r \neq 0$, the SCM of mCSLACC enjoys noise-free property and introduces a $|G|$ over the singular values of SCM in traditional CSL algorithms. Since $|G|$ exponentially decayed, taking advantage of all noise-free amplifications is necessary. To this end, we propose the mCSLSACC. By summing up individual SCMs $\mathbf{R}(\mathbf{Y},\mathcal{A}_{m},\mathbb{I}_{0},r)$ $(1 \le r \le M-j)$ over $r$ for a given sub-array indexed by $\mathbb{I}_{0}$, we obtain the statistical relation based on \eqref{MatYS2} between the combined SCM $\mathbf{R}_{Y_c}$ and the SCM of traditional CSL algorithms as
    \begin{flalign}
	\label{MatYSCombined}
	\bar{\mathbf{R}}_{Y_c}
	= \sum_{r = 1}^{i-1} \sum_{u \in \mathbb{I}_{0}} {\mathbf{q}_{u}^{H} \mathbf{q}_{u-r}}  \bar{\mathbf{R}}_{sa} + \sum_{r = 1}^{M-j} \sum_{u \in \mathbb{I}_{0}} {\mathbf{q}_{u}^{H} \mathbf{q}_{u+r}}  \bar{\mathbf{R}}_{sa}.
	\end{flalign}
Then, {based on \eqref{MatYSCombined},} the singular value relation is given by
	\begin{flalign}
	\nonumber
	&\mathbf{D}_{Y_{c}}
	= \sum_{r=1}^{i-1} \mathbf{D}(\mathbf{Y},\mathcal{A}_{m},\mathbb{I}_{0},-r) + \sum_{r=1}^{M-j} \mathbf{D}(\mathbf{Y},\mathcal{A}_{m},\mathbb{I}_{0},r) \\
	\label{MatYS3}
	&= \underbrace{ \left( \sum_{r=1}^{i-1} \left| \sum_{u \in \mathbb{I}_{0}} {\mathbf{q}_{u}^{H} \mathbf{q}_{u-r}} \right| + \sum_{r=1}^{M-j} \left| \sum_{u \in \mathbb{I}_{0}} {\mathbf{q}_{u}^{H} \mathbf{q}_{u+r}} \right| \right) }_{|G|} \mathbf{D}_{sa},
	\end{flalign}
where $\mathbf{D}_{Y_{c}} \triangleq sv \left( \bar{\mathbf{R}}_{Y_c} \right)$. {{With} the exponential correlation model,} the following theorem {gives} the amplification factor by the mCSLSACC.

\begin{theorem}
\label{thmM}
    Based on the exponential correlation model in \eqref{ECM}, the derived amplification factor of the basic mCSLACC in \eqref{svGain} and the singular value relation in \eqref{MatYS3}, the amplification factor induced by the mCSLSACC over traditional CSL algorithms is given by
    \begin{flalign}
    \label{svGain1}
    \left| G \right| = (j-i+1) \frac{ 2|\rho| - |\rho|^{i} - |\rho|^{M-j+1} }{1-|\rho|}.
    \end{flalign}
\end{theorem}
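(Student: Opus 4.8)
The plan is to read off \eqref{svGain1} as a direct evaluation of the bracketed amplification factor $|G|$ that already appears, in generic form, in the singular value relation \eqref{MatYS3}. That relation isolates
\[
|G| = \sum_{r=1}^{i-1}\left|\sum_{u\in\mathbb{I}_0}\mathbf{q}_u^H\mathbf{q}_{u-r}\right| + \sum_{r=1}^{M-j}\left|\sum_{u\in\mathbb{I}_0}\mathbf{q}_u^H\mathbf{q}_{u+r}\right|,
\]
so the whole task reduces to evaluating these two outer sums over the shift $r$ once each inner sum is expressed in closed form under the exponential correlation model.

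First I would invoke the per-shift result \eqref{svGain}, which was already established for the basic mCSLACC from the exponential-model column correlations \eqref{columncorr}. For a fixed positive shift $r$, each summand $\mathbf{q}_u^H\mathbf{q}_{u+r}$ depends only on the separation $r$ and not on $u$, so the inner sum collapses to $j-i+1$ identical terms and $\left|\sum_{u\in\mathbb{I}_0}\mathbf{q}_u^H\mathbf{q}_{u+r}\right| = (j-i+1)|\rho|^{r}$. By the Hermitian structure of $\mathbf{Q}^{\frac{1}{2}}$ the negative-shift inner sums share the same magnitude, so $\left|\sum_{u\in\mathbb{I}_0}\mathbf{q}_u^H\mathbf{q}_{u-r}\right| = (j-i+1)|\rho|^{r}$ as well. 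Substituting both into $|G|$ and pulling out the common $j-i+1$ leaves two finite geometric sums in $|\rho|$.

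The remaining step is routine bookkeeping: I would apply the finite geometric series identity $\sum_{r=1}^{n}|\rho|^{r} = (|\rho|-|\rho|^{n+1})/(1-|\rho|)$ with $n=i-1$ for the first sum, yielding $(|\rho|-|\rho|^{i})/(1-|\rho|)$, and with $n=M-j$ for the second, yielding $(|\rho|-|\rho|^{M-j+1})/(1-|\rho|)$. Adding these two fractions and restoring the factor $j-i+1$ produces exactly the claimed expression \eqref{svGain1}.

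I do not expect a genuine obstacle; the only points demanding care are the matching of the two summation ranges to the physical picture (the negative shifts run $1\le r\le i-1$ and the positive shifts run $1\le r\le M-j$, reflecting how far the sub-array indexed by $\mathbb{I}_0$ can slide left and right inside the $M$ antennas) and the alignment of the upper limit $n+1$ in the geometric series with the exponents $i$ and $M-j+1$ in the target formula. A useful consistency check is to confirm that the per-shift magnitudes for left and right shifts truly coincide, which holds because the exponential model makes $\mathbf{q}_{m_1}^H\mathbf{q}_{m_2}$ depend only on $|m_1-m_2|$.
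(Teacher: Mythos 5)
Your proposal is correct and follows essentially the same route as the paper's own proof: it uses the symmetry of the exponential model to equate the negative-shift and positive-shift inner sums, substitutes the per-shift factor $(j-i+1)|\rho|^{r}$ from \eqref{svGain} into the bracketed $|G|$ of \eqref{MatYS3}, and evaluates the two finite geometric series over $1\le r\le i-1$ and $1\le r\le M-j$. The only difference is that you spell out the geometric-series bookkeeping explicitly, whereas the paper compresses it into a single sentence.
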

\begin{proof}
    The proof of \eqref{svGain1} can be found in Appendix \ref{Theorem1}.
\end{proof}
Given a fixed $M$, $|G|$ in \eqref{svGain1} does not always increase with the increasing $j-i+1$. 
The maximum of $|G|$ exists, depending on both $|\rho|$ and $j-i+1$. Accordingly, Theorem \ref{thmM} can be used as a criterion to determine the optimal $j-i+1$ that maximizes $|G|$.

Discussion 1: Theorem \ref{thmM} bridges a connection between the singular value gain and the spatial correlation as well as the number of deployed antennas, providing us a convenient way to analyze how such factors influence the singular values in statistical sense. As the recovery stage retains for the WBSS, the singular values of CSL algorithms should be discriminated from noise for signal subspace extraction.  Considering that the large singular value makes the discrimination accurate, one can use the amplification factor as a guide to design an MIMO system that satisfies the discrimination standard of singular values. For a given SNR that enables traditional CSL algorithms, we can design the spatial correlation and the number of deployed antennas to meet the SNR that enables the discrimination of singular values.
Additionally, given a fixed number of total antennas, the theoretical result provides a criterion to determine the optimal number of deployed antennas when one uses the proposed mCSLSACC algorithm.

\section{Vector Form CSL with Antenna Cross-correlations}
\label{SectionV}
Similarly, we first establish the SCM relations between the proposed vCSLACC with traditional CSL algorithms and then analyze the space diversity and shift factor of the antenna array through the SCM relations.

\subsection{Statistical Covariance Matrix}
In the vCSLACC, the Nyquist samples and sub-samples are reorganized by operator {$\mathcal{A}_{v}$}, which stacks the columns of matrices into vectors. More specifically, defining vectors $\mathbf{n}_{\mathbb{I}_{0}} \triangleq vec ( \mathbf{N}_{\mathbb{I}_{0}} )$, $\mathbf{n}_{\mathbb{I}_{r}} \triangleq vec \left( \mathbf{N}_{\mathbb{I}_{r}} \right)$ and $\mathbf{s} \triangleq vec(\mathbf{S})$, and vectorizing both sides of \eqref{MatRelationship_1} and \eqref{MatRelationship_2}, we obtain
    \begin{flalign}
    \label{VecRelationship2_1}
    \mathbf{y}_{\mathbb{I}_{0}} &= \left( \mathbf{G}_{\mathbb{I}_{0}} \otimes \mathbf{\Omega} \right) \mathbf{s} + \mathbf{n}_{\mathbb{I}_{0}}, \\
    \label{VecRelationship2_2}
    \mathbf{y}_{\mathbb{I}_{r}} &= \left( \mathbf{G}_{\mathbb{I}_{r}} \otimes \mathbf{\Omega} \right) \mathbf{s} + \mathbf{n}_{\mathbb{I}_{r}}.
    \end{flalign}
The SCM $\mathbf{R}(\mathbf{Y},\mathcal{A}_{v},\mathbb{I}_{0},r)$ is thus derived based on \eqref{VecRelationship2_1} and \eqref{VecRelationship2_2} as
	\begin{flalign}
	\nonumber
	\mathbf{R}(\mathbf{Y},\mathcal{A}_{v},\mathbb{I}_{0},r) 
	&= \left( \mathbf{G}_{\mathbb{I}_{0}} \otimes \mathbf{\Omega} \right) \mathbf{R}_{s} \left( \mathbf{G}_{\mathbb{I}_{1}} \otimes \mathbf{\Omega} \right)^{H} + \mathbf{R}(\mathbf{N},\mathcal{A}_{v},\mathbb{I}_{0},r)  \\
	\nonumber
	&= \sum_{k=1}^{K} [\mathbf{g}_{k}]_{\mathbb{I}_{0}} [\mathbf{g}_{k}^{H}]_{\mathbb{I}_{r}} \otimes \mathbf{\Omega} \mathbf{R}_{s_{k}} \mathbf{\Omega}^{H} \\
	\label{VecYS}
	&+ \mathbf{R}(\mathbf{N},\mathcal{A}_{v},\mathbb{I}_{0},r),
	\end{flalign}
where $\mathbf{R}_{\mathbf{s}} \triangleq \mathscr{E}\left\{ \mathbf{s} \mathbf{s}^{H} \right\}$ is a block diagonal matrix, with $\mathbf{R}_{s_{k}} (1 \le k \le K)$ being the diagonal blocks. The second equality of \eqref{VecYS} follows the block diagonal structure of $\mathbf{R}_{\mathbf{s}}$. 
We discuss different cases of the noise term $\mathbf{R}(\mathbf{N},\mathcal{A}_{v},\mathbb{I}_{0},r)$ as follows:
\begin{itemize}
    \item 
    When $0 \le r \le j-i$, the noise term $\mathbf{R}(\mathbf{N},\mathcal{A}_{v},\mathbb{I}_{0},r)$ exists as a block off-diagonal (diagonal when $r=0$) matrix of dimension $(j-i+1)P$, whose off-diagonal blocks are given by $\mathbf{\Pi}_{m} \triangleq diag \{ \alpha_{m} \sigma^{2}_{m}, \cdots, \alpha_{m} \sigma^{2}_{m} \}$ $\in \mathbb{C}^{P \times 1}$ $(1 \le m \le j-i+1-r)$. The SCM $\mathbf{R}(\mathbf{Y},\mathcal{A}_{v},\mathbb{I}_{0},r)$ is therefore corrupted by noise.
    \item
    The noise term $\mathbf{R}(\mathbf{N},\mathcal{A}_{v},\mathbb{I}_{0},r)$ disappears when $j-i < r \le M-j$ and the SCM $\mathbf{R}(\mathbf{Y},\mathcal{A}_{v},\mathbb{I}_{0},r)$ is therefore not corrupted by noise. As a result, the vCSLACC can be used to improve the performance of signal subspace learning in low SNR scenarios.
\end{itemize}

In order to establish the SCM relation between the proposed vCSLACC and traditional CSL algorithms for further analysis, we similarly investigate the SCMs in statistical sense w.r.t. the wireless fading channel as in Section \ref{FOURA}. First of all, we derive $\mathscr{E} \left\{[\mathbf{g}_{k}]_{\mathbb{I}_{0}} [\mathbf{g}_{k}^{H}]_{\mathbb{I}_{r}}\right\}$ as
	\begin{flalign}
	\nonumber
	\mathscr{E}\left\{ [\mathbf{g}_{k}]_{\mathbb{I}_{0}} [\mathbf{g}_{k}^{H}]_{\mathbb{I}_{r}} \right\} 
	&= \sigma_{t_k}^{2} \sum_{m=1}^{M} \left[\mathbf{q}_{m} \right]_{i:j} \left[\mathbf{q}_{m}^{H} \right]_{(i+r):(j+r)} \\
	\label{VecGain}
	=& \sigma_{t_k}^{2} \underbrace{[ ( \mathbf{Q}^{\frac{1}{2}} ) ]_{i:j,:} [ ( \mathbf{Q}^{\frac{1}{2}} )^{H} ]_{:,(i+r):(j+r)} }_{\mathbf{T}_{i,j,r}},
	\end{flalign}
where $\mathbf{T}_{i,j,r}$ is a sub-matrix constituted by rows {of $\mathbf{Q}$} indexed from $i$ to $j$ and columns {of $\mathbf{Q}$} indexed from $i+r$ to $j+r$.
Plugging \eqref{VecGain} into the expectation of \eqref{VecYS}, we then get
	\begin{equation}
	\begin{aligned}
	\bar{\mathbf{R}}(\mathbf{Y},\mathcal{A}_{v},\mathbb{I}_{0},r)
	= \mathbf{T}_{i,j,r} \otimes \bar{\mathbf{R}}_{sa} + \mathbf{R}(\mathbf{N},\mathcal{A}_{v},\mathbb{I}_{0},r),
	\label{VecYS1}
	\end{aligned}
	\end{equation}
with $\bar{\mathbf{R}}(\mathbf{Y},\mathcal{A}_{v},\mathbb{I}_{0},r)  \triangleq \mathscr{E}_{\mathbf{g}_{k}} \left\{\mathbf{R}(\mathbf{Y},\mathcal{A}_{v},\mathbb{I}_{0},r) \right\}$. 
As seen from \eqref{VecYS1}, the amplification factors over singular values of $\bar{\mathbf{R}}_{sa}$ are determined by singular values of matrix $\mathbf{T}_{i,j,r}$. The details will be discussed in the following subsection.

\subsection{Analysis on Space Diversity and Shift Factor}
We also analyze the vCSLACC based on the singular value amplification over traditional CSL algorithms. Since the amplification factors are singular values of $\mathbf{T}_{i,j,r}$ which are not unique usually, we only focus on the {maximum} amplification factor that decides the amplification effect over traditional CSL algorithms. We use this {maximum} amplification factor to determine whether the singular values by the vCSLACC are more easily distinguished from noise than traditional ones. Based on \eqref{VecYS1} and the equation $eig(\mathbf{A} \otimes \mathbf{B}) = eig(\mathbf{A}) \otimes eig(\mathbf{B})$, we obtain the relation of singular values between $\bar{\mathbf{R}}(\mathbf{Y},\mathcal{A}_{v},\mathbb{I}_{0},r)$ and $\bar{\mathbf{R}}_{sa}$ as 
	\begin{equation}
	\begin{aligned}
	\mathbf{D}^{2}(\mathbf{Y},\mathcal{A}_{v},\mathbb{I}_{0},r)
	= \left( \mathbf{D}_{i,j,r}^{t} \right)^{2}
	\otimes \mathbf{D}_{sa}^{2} + \mathbf{D}^{2}(\mathbf{N},\mathcal{A}_{v},\mathbb{I}_{0},r), 
	\label{VecSVRelation}
	\end{aligned}
	\end{equation}
where $\mathbf{D}(\mathbf{Y},\mathcal{A}_{v},\mathbb{I}_{0},r) \triangleq sv \left( \bar{\mathbf{R}}(\mathbf{Y},\mathcal{A}_{v},\mathbb{I}_{0},r) \right)$, $\mathbf{D}_{i,j,r}^{t} \triangleq sv\left( \mathbf{T}_{i,j,r} \right)$ and the diagonal noise matrix $\mathbf{D}^{2}(\mathbf{N},\mathcal{A}_{v},\mathbb{I}_{0},r) = blkdiag\{ \mathbf{\Pi}_{1}^{2}, \cdots, \mathbf{\Pi}_{j-i+1-r}^{2}, \underbrace{ \mathbf{0}, \cdots, \mathbf{0} }_{r} \}$.

When $0 \le r \le j-i$, the noise term exists and the singular value relation \eqref{VecSVRelation} finds an equivalent form as follows 
	\begin{flalign}
	\label{VecEigRelation1}
	\mathbf{D}(\mathbf{Y},\mathcal{A}_{v},\mathbb{I}_{0},r)
	= \mathbf{D}_{i,j,r}^{t}
	\otimes \mathbf{D}_{sa} + \widetilde{\mathbf{D}}(\mathbf{N},\mathcal{A}_{v},\mathbb{I}_{0},r),
	\end{flalign}
where $\widetilde{\mathbf{D}}(\mathbf{N},\mathcal{A}_{v},\mathbb{I}_{0},r)$ denotes the equivalent diagonal noise matrix whose diagonal elements (i.e., the variances of noise) are upper bounded by $\frac{\lambda_{n}^{2}}{\lambda_{y}}$, with $\lambda_{n}$ and $\lambda_{y}$ being the corresponding singular values of $\mathbf{D}(\mathbf{N},\mathcal{A}_{v},\mathbb{I}_{0},r)$ and $\mathbf{D}(\mathbf{Y},\mathcal{A}_{v},\mathbb{I}_{0},r)$, respectively.

To shed more light on the amplification effect, the exponential correlation model \eqref{ECM} is employed. Since the closed form expression of {maximum} singular value is difficult to derive, we characterize the {maximum} singular value with its tight upper and lower bounds. We discuss $r = 0$ and $0 < r \le j-i$, respectively, {in the noisy setup}. 
More specifically, in the case of $r = 0$, we give upper and lower bounds and show the amplification effect over traditional CSL algorithms, whereas, in the case of $0 < r \le j-i$, we only investigate the amplification effect over traditional CSL algorithms.

In the case of $r = 0$, we give {the following} proposition: 
\begin{proposition}
\label{propCase1}
    For $\mathbf{T}_{i,j,r}$ with $r = 0$, the upper and lower bounds on the {maximum} singular value of $\mathbf{T}_{i,j,r}$ are given by
    \begin{flalign}
	\label{lowerbound1}
	B_{lower} &= \frac{1+|\rho|}{1-|\rho|} - \frac{2|\rho| (1-|\rho|^{j-i+1}) }{ (j-i+1) (1-|\rho|)^{2} },\\
	\label{upperbound1}
	B_{upper} &= 
	\left\{ \begin{aligned}
	&\frac{1+|\rho|}{1-|\rho|} - \frac{2|\rho|^{\frac{j-i+2}{2}} }{ 1-|\rho| }&  &\text{for odd} &j-i+1, \\
	&\frac{(1+|\rho|)(1-|\rho|^{\frac{j-i+1}{2}})}{1-|\rho|}& &\text{for even} &j-i+1,
	\end{aligned}  \right.
	\end{flalign}
\end{proposition}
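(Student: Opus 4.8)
The plan is to first identify the algebraic structure of $\mathbf{T}_{i,j,0}$ and then bound its largest eigenvalue by two complementary estimates. Setting $r=0$ in \eqref{VecGain}, the $(a,b)$ entry of $\mathbf{T}_{i,j,0}$ is $[\mathbf{Q}]_{i+a-1,\,i+b-1}$, so $\mathbf{T}_{i,j,0}$ is exactly the $L\times L$ principal submatrix of $\mathbf{Q}$ with $L\triangleq j-i+1$. By the exponential model \eqref{ECM} together with the column correlations \eqref{columncorr}, this submatrix is itself an exponential correlation matrix with the same coefficient $\rho$; it is therefore Hermitian and positive definite, whence its maximum singular value equals its largest eigenvalue $\lambda_{\max}$. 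Writing $\rho=|\rho|e^{i\phi}$ and $\mathbf{D}=\mathrm{diag}(1,e^{i\phi},\cdots,e^{i(L-1)\phi})$, an entrywise check using \eqref{columncorr} gives $\mathbf{D}^{H}\mathbf{T}_{i,j,0}\mathbf{D}=\widetilde{\mathbf{T}}$, where $\widetilde{\mathbf{T}}$ is the real symmetric matrix with entries $|\rho|^{|a-b|}$. Since $\mathbf{D}$ is unitary, $\widetilde{\mathbf{T}}$ and $\mathbf{T}_{i,j,0}$ are isospectral, so it suffices to bound $\lambda_{\max}(\widetilde{\mathbf{T}})$ with $t\triangleq|\rho|$.

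For the lower bound \eqref{lowerbound1}, I would apply the Rayleigh quotient with the all-ones vector $\mathbf{1}$, giving $\lambda_{\max}(\widetilde{\mathbf{T}})\ge \mathbf{1}^{T}\widetilde{\mathbf{T}}\mathbf{1}/L$. Grouping the entries by index distance $d=|a-b|$ yields $\mathbf{1}^{T}\widetilde{\mathbf{T}}\mathbf{1}=L+2\sum_{d=1}^{L-1}(L-d)t^{d}$; evaluating this arithmetic--geometric sum in closed form and dividing by $L$ reproduces $B_{lower}$.

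For the upper bound \eqref{upperbound1}, I would use that $\widetilde{\mathbf{T}}$ is symmetric and entrywise nonnegative, so $\lambda_{\max}(\widetilde{\mathbf{T}})\le\|\widetilde{\mathbf{T}}\|_{\infty}=\max_{1\le i\le L}\sum_{b=1}^{L}t^{|i-b|}$. The $i$th row sum equals $\frac{(1+t)-t^{i}-t^{L+1-i}}{1-t}$, so maximizing over $i$ is equivalent to minimizing $g(i)\triangleq t^{i}+t^{L+1-i}$. From $g(i)-g(i+1)=(1-t)(t^{i}-t^{L-i})$ one sees that $g$ decreases for $i<L/2$ and increases for $i>L/2$; hence the minimizer is the balanced index $(L+1)/2$ when $L$ is odd (so $g_{\min}=2t^{(L+1)/2}$) and the pair $\{L/2,\,L/2+1\}$ when $L$ is even (so $g_{\min}=t^{L/2}(1+t)$). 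Substituting these into the row-sum expression produces the odd and even branches of $B_{upper}$, respectively.

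The step I expect to be the main obstacle is the upper bound, specifically justifying the parity split: one must argue rigorously from the discrete monotonicity of $g$ that the central row attains the maximal row sum, and treat odd and even $L$ separately, since no single integer index balances the exponents when $L$ is even. The closed-form arithmetic--geometric summation in the lower bound is routine but must be carried out carefully, and the diagonal phase conjugation---though elementary---is the key reduction that legitimizes replacing $\rho$ by $|\rho|$ throughout.
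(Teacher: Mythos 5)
Your proof is correct, and it takes a genuinely different route from the paper's, which is essentially a proof by citation: the paper observes, exactly as you do, that $\mathbf{T}_{i,j,0}$ is the principal submatrix of $\mathbf{Q}$ on the index set $\mathbb{I}_{0}$ and hence itself an exponential correlation matrix whose maximum singular value coincides with its maximum eigenvalue, and then imports the lower bound \eqref{lowerbound1} from \cite{Choi2014Bounds,Park2017Dynamic} and the upper bound \eqref{upperbound1} from \cite{Lim2017Bounds} without any derivation. You instead rederive both bounds from first principles, and the algebra checks out: writing $t=|\rho|$ and $L=j-i+1$, the Rayleigh quotient of the all-ones vector, $\mathbf{1}^{T}\widetilde{\mathbf{T}}\mathbf{1}/L = 1 + \frac{2}{L}\sum_{d=1}^{L-1}(L-d)t^{d}$, simplifies exactly to \eqref{lowerbound1} after the arithmetic--geometric summation, and the maximum-row-sum bound $\max_{a}\frac{(1+t)-t^{a}-t^{L+1-a}}{1-t}$, together with your discrete monotonicity analysis of $g(a)=t^{a}+t^{L+1-a}$, produces precisely the odd branch ($g_{\min}=2t^{(L+1)/2}$ at $a=(L+1)/2$) and the even branch ($g_{\min}=t^{L/2}(1+t)$ at $a\in\{L/2,L/2+1\}$) of \eqref{upperbound1}. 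What your route buys is a self-contained, checkable argument that moreover explains the otherwise unmotivated parity split in \eqref{upperbound1} (it comes from which central row balances the two exponents); what the paper's route buys is brevity and a pointer to the cited literature where these bounds are studied in more generality. One cosmetic slip: with the convention of \eqref{ECM}, the entries of $\mathbf{T}_{i,j,0}$ carry phase $e^{i\phi(b-a)}$, so with your $\mathbf{D}=\mathrm{diag}(1,e^{i\phi},\dots,e^{i(L-1)\phi})$ the conjugation $\mathbf{D}^{H}\mathbf{T}_{i,j,0}\mathbf{D}$ doubles the phases rather than cancelling them; the correct similarity is $\mathbf{D}\mathbf{T}_{i,j,0}\mathbf{D}^{H}$ (equivalently, negate the exponents in $\mathbf{D}$). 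Since $\mathbf{D}$ is unitary either way, the isospectrality claim and everything downstream are unaffected.
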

\begin{proof}
	The proofs of \eqref{lowerbound1} and \eqref{upperbound1} can be found in Appendix \ref{Proposition1}.
\end{proof}
It is noted that $B_{lower}$ is no less than one, which reveals that the {maximum} amplification factor has an augmented effect over the singular values of traditional CSL algorithms. Thus, the superiority of vCSLACC {at} $r=0$ over traditional CSL algorithms is concluded and the range of amplification introduced by vCSLACC is given by the bounds.

For $0 < r \le j-i$, we define $s \triangleq j-i-r$ and rewrite $\mathbf{T}_{i,j,r}$ as 
	\begin{equation}
	\begin{aligned}
	&\mathbf{T}_{s,r}
	=\begin{bmatrix}
	(\rho^{*})^{r} &\cdots   &(\rho^{*})^{r+s}  &\cdots & (\rho^{*})^{r+s+r}\\
	\vdots 	   &\ddots   &\vdots 		    &\ddots & \vdots\\
	1 	 	   &\cdots   &(\rho^{*})^{s}	    &\cdots & (\rho^{*})^{r+s}\\
	\vdots 	   &\ddots   &\vdots		    &\ddots & \vdots\\
	\rho^{s}   	   &\cdots   &1			    &\cdots & (\rho^{*})^{r}\\
	\end{bmatrix} \\
	&\triangleq 
	\underbrace{ \left[\begin{aligned} &\mathbf{t}_{1}& &\mathbf{t}_{2}& &\cdots& \mathbf{t}_{s+1} \end{aligned}\right.}_{s+1}
	\underbrace{\left. \begin{aligned} &\rho^{*}\mathbf{t}_{s+1}& &\cdots& (\rho^{*})^{r}\mathbf{t}_{s+1} \end{aligned} \right] }_{r},
	\label{subECM}
	\end{aligned}
	\end{equation}
where $\mathbf{t}_{i}, i = 1,2, \cdots, s+1$ are defined {as} columns of $\mathbf{T}_{s,r}$. In this condition, the tight lower and upper bounds of \eqref{subECM} are {too} complicated to {have} {closed {forms}}. Instead, the following theorem provides an alternative approach to show the amplification effect by vCSLACC {for} $0 < r \le j-i$ over traditional CSL algorithms.

\begin{theorem}
\label{thmCase2}
    For $0 < r \le j-i$, the proposed vCSLACC has an augmented effect on singular values over traditional CSL algorithms in statistical sense according to the following inequality
    \begin{flalign}
	\label{inequality1}
	\lambda_{max} \ge \sqrt{ \frac{1}{s+1} trace \left( \mathbf{T}_{s,r}^{H} \mathbf{T}_{s,r} \right) } \ge 1.
	\end{flalign}
\end{theorem}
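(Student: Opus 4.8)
The plan is to prove the two inequalities in \eqref{inequality1} separately: the left inequality follows from the low-rank structure of $\mathbf{T}_{s,r}$ that is already exhibited by the factorization in \eqref{subECM}, while the right inequality follows from a direct count of the unit-valued entries of $\mathbf{T}_{s,r}$ together with the nonnegativity of the Frobenius norm.

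First I would establish the left inequality $\lambda_{max} \ge \sqrt{ \frac{1}{s+1} trace ( \mathbf{T}_{s,r}^{H} \mathbf{T}_{s,r} ) }$ by a rank argument. Although $\mathbf{T}_{s,r}$ is a square matrix of order $j-i+1 = s+r+1$, the column factorization displayed in \eqref{subECM} shows that its last $r$ columns are scalar multiples of $\mathbf{t}_{s+1}$, namely $\rho^{*}\mathbf{t}_{s+1}, (\rho^{*})^{2}\mathbf{t}_{s+1}, \cdots, (\rho^{*})^{r}\mathbf{t}_{s+1}$. Hence the column space of $\mathbf{T}_{s,r}$ is contained in the span of $\mathbf{t}_{1}, \mathbf{t}_{2}, \cdots, \mathbf{t}_{s+1}$, so $\mathbf{T}_{s,r}$ has rank at most $s+1$ and therefore at most $s+1$ nonzero singular values. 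Letting $\lambda_{1} \ge \lambda_{2} \ge \cdots \ge 0$ denote these singular values with $\lambda_{1} = \lambda_{max}$, I would use $trace ( \mathbf{T}_{s,r}^{H} \mathbf{T}_{s,r} ) = \| \mathbf{T}_{s,r} \|_{F}^{2} = \sum_{k} \lambda_{k}^{2}$ and bound each of the at most $s+1$ nonzero terms by $\lambda_{max}^{2}$, giving $trace ( \mathbf{T}_{s,r}^{H} \mathbf{T}_{s,r} ) \le (s+1)\lambda_{max}^{2}$, which rearranges immediately to the left inequality.

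Next I would prove the right inequality $\sqrt{ \frac{1}{s+1} trace ( \mathbf{T}_{s,r}^{H} \mathbf{T}_{s,r} ) } \ge 1$, for which it suffices to show $\| \mathbf{T}_{s,r} \|_{F}^{2} \ge s+1$. By \eqref{columncorr}, the $(m_{1},m_{2})$ entry of $\mathbf{T}_{s,r}$ equals $\mathbf{q}_{m_{1}}^{H} \mathbf{q}_{m_{2}}$ with $m_{1} \in \{i, \cdots, j\}$ and $m_{2} \in \{i+r, \cdots, j+r\}$, and this entry equals exactly $1$ precisely when $m_{1} = m_{2}$. Because $0 < r \le j-i$, the two index ranges overlap, and $m_{1} = m_{2}$ occurs at exactly the $s+1 = j-i+1-r$ positions with $m_{1} = m_{2} \in \{i+r, \cdots, j\}$; these are the sub-diagonal of ones visible in \eqref{subECM}. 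Since every squared-modulus entry is nonnegative, summing just these $s+1$ unit contributions already gives $\| \mathbf{T}_{s,r} \|_{F}^{2} \ge s+1$, which yields the right inequality and completes the chain.

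The argument is largely bookkeeping once the structure is noticed, so the main subtlety I anticipate is justifying the two counting facts precisely: that the rank of $\mathbf{T}_{s,r}$ is bounded by $s+1$ rather than by the full order $j-i+1$, and that the number of unit entries is exactly $s+1$. Both rest on the explicit form of $\mathbf{T}_{s,r}$ in \eqref{subECM} and the correlation expression \eqref{columncorr}; getting the index bookkeeping for the overlapping ranges $\{i,\cdots,j\}$ and $\{i+r,\cdots,j+r\}$ correct under the hypothesis $0 < r \le j-i$ is where I would be most careful.
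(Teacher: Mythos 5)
Your proof is correct and follows essentially the same route as the paper's: both bound $\lambda_{max}^{2}$ below by the average of the (at most) $s+1$ nonzero eigenvalues of $\mathbf{T}_{s,r}^{H}\mathbf{T}_{s,r}$ using the rank-$(s+1)$ column structure in \eqref{subECM}, and both obtain $trace\left(\mathbf{T}_{s,r}^{H}\mathbf{T}_{s,r}\right) \ge s+1$ from the $s+1$ unit entries (the paper phrases this as each of the first $s+1$ columns having squared norm at least one, which is the same counting). If anything, your version is marginally cleaner, since you only need rank at most $s+1$ rather than the paper's claim that the first $s+1$ columns are exactly linearly independent.
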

\begin{proof}
    The proof of \eqref{inequality1} can be found in Appendix \ref{Theorem3}.
\end{proof}
Besides Theorem \ref{thmCase2}, we additionally have an inequality of the derived upper bound on diagonal elements of $\widetilde{\mathbf{D}}(\mathbf{N},\mathcal{A}_{v},\mathbb{I}_{0},r)$ as $\frac{\lambda_{n}^{2}}{\lambda_{y}} \le \lambda_{n}$ according to {\eqref{VecEigRelation1}}, which shows that the noise variance of vCSLACC {for} $0 < r \le j-i$ is lower than that of traditional CSL algorithms. Combining higher singular values with a lower noise variance of the proposed vCSLACC, we conclude that the vCSLACC outperforms traditional CSL algorithms for $0 \le r \le j-i$. It is noted that the noise term exists in this condition and the performance degradation of signal subspace learning still remains.

When $j-i < r \le M-j$, the noise term $\mathbf{D}^{2}(\mathbf{N},\mathcal{A}_{v},\mathbb{I}_{0},r)$ in \eqref{VecSVRelation} disappears, i.e., $\mathbf{D}^{2}(\mathbf{Y},\mathcal{A}_{v},\mathbb{I}_{0},r) = \left( \mathbf{D}_{i,j,r}^{t} \right)^{2} \otimes \mathbf{D}_{sa}^{2}$. Since all matrices in both sides of {such equation} are diagonal, we can {simply write \eqref{VecSVRelation}} as
	\begin{flalign}
	\label{VecEigRelation}
	    \mathbf{D}(\mathbf{Y},\mathcal{A}_{v},\mathbb{I}_{0},r)
    	= \mathbf{D}_{i,j,r}^{t} \otimes \mathbf{D}_{sa}.
	\end{flalign}
 We also employ the exponential correlation model for this case and then $\mathbf{T}_{i,j,r}$ becomes 
	\begin{flalign}
	\nonumber
	\mathbf{T}_{i,j,r} &=
	\begin{bmatrix}
	(\rho^{*})^{r} &(\rho^{*})^{r+1} &\cdots & (\rho^{*})^{r+(j-i)}\\
	(\rho^{*})^{r-1} &(\rho^{*})^{r} &\cdots & (\rho^{*})^{r+(j-i-1)}\\
	\vdots 	   	   &\vdots			 &\ddots & \vdots\\
	(\rho^{*})^{r-(j-i)} &(\rho^{*})^{r-(j-i-1)} &\cdots & (\rho^{*})^{r}
	\end{bmatrix} \\
	\label{subECM1}
	&\triangleq 
	\begin{bmatrix}
	\mathbf{t}_{1} &\rho^{*} \mathbf{t}_{1} &\cdots & (\rho^{*})^{j-i} \mathbf{t}_{1}\\
	\end{bmatrix},
	\end{flalign}
where $\mathbf{t}_{1} \triangleq [(\rho^{*})^{r}, (\rho^{*})^{r-1}, \cdots, (\rho^{*})^{r-(j-i)}]^{T}$. 
It is noted from \eqref{subECM1} that $\mathbf{T}_{i,j,r}$ is a rank-$1$ matrix such that the only singular value of $\mathbf{T}_{i,j,r}$ is also the {maximum} one. 
The following theorem gives the upper and lower bounds on the {maximum} singular value of $\mathbf{T}_{i,j,r}$.

\begin{theorem}
\label{thmCase3}
    For $j-i \le r \le M-j$, the upper and lower bounds on the {maximum} singular value of $\mathbf{T}_{i,j,r}$ are given by 
    \begin{flalign}
	\label{lowerbound2}
	B_{lower} &= \frac{|\rho|^{r+1} (1-|\rho|^{j-i+1})(1+|\rho|^{j-i+1})}{|\rho|^{j-i+1}(1-|\rho|)(1+|\rho|)} \\
	\label{upperbound2}
	B_{upper} &= \frac{ |\rho|^{r+1} (1-|\rho|^{j-i+1}) }{ |\rho|^{j-i+1} (1-|\rho|) } \sqrt{ \frac{1+|\rho|^{j-i+1}}{1+|\rho|} },
	\end{flalign}
\end{theorem}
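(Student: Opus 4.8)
The plan is to exploit the rank-one structure of $\mathbf{T}_{i,j,r}$ already exposed in \eqref{subECM1}. Writing $\mathbf{T}_{i,j,r} = \mathbf{t}_{1}\boldsymbol{\phi}^{H}$ with $\boldsymbol{\phi} \triangleq [1, \rho, \cdots, \rho^{j-i}]^{T}$, I observe that $\mathbf{T}_{i,j,r}^{H}\mathbf{T}_{i,j,r} = \|\mathbf{t}_{1}\|_{2}^{2}\,\boldsymbol{\phi}\boldsymbol{\phi}^{H}$ has a single nonzero eigenvalue $\|\mathbf{t}_{1}\|_{2}^{2}\|\boldsymbol{\phi}\|_{2}^{2}$, so that the unique—hence maximum—singular value is $\lambda_{max} = \|\mathbf{t}_{1}\|_{2}\|\boldsymbol{\phi}\|_{2}$. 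Because the hypothesis $j-i \le r$ keeps every exponent in $\mathbf{t}_{1} = [(\rho^{*})^{r}, \cdots, (\rho^{*})^{r-(j-i)}]^{T}$ nonnegative, both norms reduce to ordinary finite geometric sums in $|\rho|$, which is the feature that makes a closed form attainable.

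Next I would evaluate the two sums, obtaining $\|\boldsymbol{\phi}\|_{2}^{2} = \sum_{k=0}^{j-i}|\rho|^{2k} = \frac{1-|\rho|^{2(j-i+1)}}{1-|\rho|^{2}}$ and $\|\mathbf{t}_{1}\|_{2}^{2} = |\rho|^{2(r-(j-i))}\frac{1-|\rho|^{2(j-i+1)}}{1-|\rho|^{2}}$. Multiplying, taking the square root, factoring $1-|\rho|^{2(j-i+1)} = (1-|\rho|^{j-i+1})(1+|\rho|^{j-i+1})$ and $1-|\rho|^{2} = (1-|\rho|)(1+|\rho|)$, and rewriting $|\rho|^{r-(j-i)} = |\rho|^{r+1}/|\rho|^{j-i+1}$, I arrive at exactly the expression $B_{lower}$ in \eqref{lowerbound2}. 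Since the matrix is rank one, this is the exact value of $\lambda_{max}$, so it automatically serves as the (tight) lower bound.

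The upper bound then follows from a single elementary estimate. With $n \triangleq j-i+1 \ge 1$ and $0 \le |\rho| \le 1$ we have $|\rho|^{n} \le |\rho|$, hence $\frac{1+|\rho|^{n}}{1+|\rho|} \le 1$ and therefore $\bigl(\frac{1+|\rho|^{n}}{1+|\rho|}\bigr)^{2} \le \frac{1+|\rho|^{n}}{1+|\rho|}$. Starting from $\lambda_{max}^{2} = |\rho|^{2(r-(j-i))}\bigl(\frac{1-|\rho|^{n}}{1-|\rho|}\bigr)^{2}\bigl(\frac{1+|\rho|^{n}}{1+|\rho|}\bigr)^{2}$ and replacing one squared sub-unit factor by its first power, I obtain after a square root precisely $B_{upper}$ in \eqref{upperbound2}; the same inequality simultaneously certifies $B_{upper} \ge B_{lower}$, so the two expressions genuinely bracket the maximum singular value.

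The derivation is essentially routine, so I do not expect a genuine obstacle. The only points demanding care are the exponent bookkeeping in $\mathbf{t}_{1}$—verifying that $j-i \le r$ indeed forces the exponents to stay nonnegative so that the geometric-sum formula is valid—and the algebraic factorization that recasts $\frac{1-|\rho|^{2n}}{1-|\rho|^{2}}$ into the product form appearing in \eqref{lowerbound2} and \eqref{upperbound2}. The rank-one structure is the crucial simplification: it collapses the spectral problem to a product of two scalar norms and makes $\lambda_{max}$ exact rather than merely bounded.
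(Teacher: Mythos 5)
Your proof is correct, and it takes a genuinely different (and in places stronger) route than the paper's. The paper proves the two bounds separately: it obtains $B_{lower}$ as a Rayleigh-quotient bound $\sqrt{\mathbf{u}^{H}\mathbf{T}_{i,j,r}^{H}\mathbf{T}_{i,j,r}\mathbf{u}/(\mathbf{u}^{H}\mathbf{u})}$ evaluated at the test vector $\mathbf{u}=\frac{1}{\sqrt{j-i+1}}[1,\rho,\cdots,\rho^{j-i}]^{T}$, and it obtains $B_{upper}$ from Gershgorin's circle theorem applied to $\mathbf{T}_{i,j,r}^{H}\mathbf{T}_{i,j,r}$ (square root of the largest absolute row sum); the fact that $B_{lower}$ is actually the exact maximum singular value is deferred to a separate eigenvector verification in Proposition~\ref{propCase3}. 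You instead diagonalize outright: from $\mathbf{T}_{i,j,r}=\mathbf{t}_{1}\boldsymbol{\phi}^{H}$ with $\boldsymbol{\phi}=[1,\rho,\cdots,\rho^{j-i}]^{T}$ you read off the single nonzero eigenvalue $\Vert\mathbf{t}_{1}\Vert_{2}^{2}\Vert\boldsymbol{\phi}\Vert_{2}^{2}$ of $\mathbf{T}_{i,j,r}^{H}\mathbf{T}_{i,j,r}$, evaluate both norms as geometric sums (legitimate because $r\ge j-i$ keeps all exponents in $\mathbf{t}_{1}$ nonnegative), and, after the factorizations $1-|\rho|^{2n}=(1-|\rho|^{n})(1+|\rho|^{n})$ and $1-|\rho|^{2}=(1-|\rho|)(1+|\rho|)$ with $n=j-i+1$, conclude $\lambda_{max}=B_{lower}$ exactly; $B_{upper}$ then drops out of the scalar inequality $\bigl(\frac{1+|\rho|^{n}}{1+|\rho|}\bigr)^{2}\le\frac{1+|\rho|^{n}}{1+|\rho|}$, which indeed reproduces \eqref{upperbound2} and simultaneously gives $B_{lower}\le B_{upper}$. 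What your route buys: the exactness statement of Proposition~\ref{propCase3} comes for free rather than requiring a separate eigenvector check, the ordering of the two bounds is certified in the same stroke, and no external spectral tool is needed. What the paper's route buys: Rayleigh-quotient and Gershgorin arguments do not rely on the rank-one structure, so they are the kind of machinery that still applies in the noisy regime $0<r\le j-i$ of Theorem~\ref{thmCase2}, where $\mathbf{T}_{s,r}$ is no longer rank one and your collapse to a product of two scalar norms is unavailable.
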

\begin{proof}
    The proofs of \eqref{lowerbound2} and \eqref{upperbound2} can be found in Appendix \ref{Theorem4}.
\end{proof}
{From {\eqref{lowerbound2}-\eqref{upperbound2}}, we know that the bounds {$B_{lower}$ and $B_{upper}$} are functions in terms of $r$, $|\rho|$ and the number of deployed antennas, which makes them convenient to compute.}
Following Theorem \ref{thmCase3}, we obtain an important proposition as follows:
\begin{proposition}
\label{propCase3}
    {For} $j-i \le r \le M-j$, the derived lower bound in \eqref{lowerbound2} is exactly the {maximum} singular value.
\end{proposition}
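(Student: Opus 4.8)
The plan is to exploit the rank-one structure of $\mathbf{T}_{i,j,r}$ that was already exhibited in \eqref{subECM1}. In the regime $j-i \le r \le M-j$ the noise term vanishes, so by \eqref{VecEigRelation} the amplification factors are \emph{exactly} the singular values of $\mathbf{T}_{i,j,r}$; and since \eqref{subECM1} shows that $\mathbf{T}_{i,j,r}$ has rank one, it possesses a single nonzero singular value, which is therefore necessarily the maximum singular value $\lambda_{max}$. Consequently it suffices to evaluate this one singular value in closed form and check that it reproduces $B_{lower}$ in \eqref{lowerbound2}, rather than merely bounding it as in Theorem \ref{thmCase3}.

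Concretely, I would first read off from \eqref{subECM1} the outer-product factorization $\mathbf{T}_{i,j,r} = \mathbf{t}_{1} \mathbf{w}^{H}$, where $\mathbf{t}_{1} \triangleq [(\rho^{*})^{r}, (\rho^{*})^{r-1}, \cdots, (\rho^{*})^{r-(j-i)}]^{T}$ is the first column and $\mathbf{w} \triangleq [1, \rho, \cdots, \rho^{j-i}]^{T}$. For a rank-one matrix the unique nonzero singular value equals the product of the factor norms, or equivalently, because there is only one nonzero singular value, $\lambda_{max} = \Vert \mathbf{T}_{i,j,r} \Vert_{F} = \sqrt{trace(\mathbf{T}_{i,j,r}^{H} \mathbf{T}_{i,j,r})} = \Vert \mathbf{t}_{1} \Vert_{2} \Vert \mathbf{w} \Vert_{2}$. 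I would then evaluate the two norms as geometric series, namely $\Vert \mathbf{w} \Vert_{2}^{2} = \sum_{k=0}^{j-i} |\rho|^{2k}$ and $\Vert \mathbf{t}_{1} \Vert_{2}^{2} = |\rho|^{2(r-(j-i))} \sum_{k=0}^{j-i} |\rho|^{2k}$, each collapsing to a closed form carrying the common factor $(1-|\rho|^{2(j-i+1)})/(1-|\rho|^{2})$. Multiplying and taking the positive square root yields $\lambda_{max} = |\rho|^{r-(j-i)} (1-|\rho|^{2(j-i+1)})/(1-|\rho|^{2})$.

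The final step is purely algebraic: factor $1-|\rho|^{2(j-i+1)} = (1-|\rho|^{j-i+1})(1+|\rho|^{j-i+1})$ and $1-|\rho|^{2} = (1-|\rho|)(1+|\rho|)$, and rewrite $|\rho|^{r-(j-i)} = |\rho|^{r+1}/|\rho|^{j-i+1}$; this reproduces $B_{lower}$ of \eqref{lowerbound2} verbatim, establishing the claim. There is no genuine analytic obstacle here, only bookkeeping of the geometric sums; the entire conceptual content is the rank-one observation that forces $\Vert \mathbf{T}_{i,j,r} \Vert_{F}$ to coincide with the spectral norm. The one point I would be careful to verify is that the exact value matches the \emph{lower} endpoint and not $B_{upper}$ of \eqref{upperbound2}, i.e. that the sandwich of Theorem \ref{thmCase3} collapses onto $B_{lower}$; this is immediate once I confirm $B_{lower} \le B_{upper}$, which reduces to $(1+|\rho|^{j-i+1})/(1+|\rho|) \le 1$ and hence holds because $j \ge i$ implies $|\rho|^{j-i+1} \le |\rho|$.
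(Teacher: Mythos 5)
Your proof is correct, and it rests on the same structural fact as the paper's proof in Appendix \ref{Proposition2} --- for this range of $r$ the matrix $\mathbf{T}_{i,j,r}$ in \eqref{subECM1} is rank one, so its single nonzero singular value is automatically $\lambda_{max}$ --- but the execution is genuinely different. The paper returns to the Rayleigh-quotient vector $\mathbf{u} = \frac{1}{\sqrt{j-i+1}}[1, \rho, \cdots, \rho^{j-i}]^{T}$ that produced $B_{lower}$ in Appendix \ref{Theorem4} and verifies that it is an eigenvector of $\mathbf{T}_{i,j,r}^{H}\mathbf{T}_{i,j,r}$, with eigenvalue $(j-i+1)(\mathbf{t}_{1}^{H}\mathbf{t}_{1})(\mathbf{u}^{H}\mathbf{u}) = B_{lower}^{2}$; tightness then follows because a Rayleigh quotient evaluated at an eigenvector returns the corresponding eigenvalue, and rank-oneness makes that eigenvalue the largest. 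You bypass the eigenvector entirely: from the factorization $\mathbf{T}_{i,j,r} = \mathbf{t}_{1}\mathbf{w}^{H}$ you invoke the rank-one identity $\lambda_{max} = \Vert \mathbf{T}_{i,j,r} \Vert_{F} = \Vert \mathbf{t}_{1} \Vert_{2} \Vert \mathbf{w} \Vert_{2}$, evaluate the two geometric sums, and match the product to \eqref{lowerbound2} by factoring differences of squares. What each route buys: yours is more self-contained and makes transparent \emph{why} the bound is exact (spectral and Frobenius norms coincide for rank-one matrices), whereas the paper's additionally exhibits the maximizing singular vector and explains a posteriori why the test vector chosen in the proof of Theorem \ref{thmCase3} was optimal. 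Your closing sanity check that $B_{lower} \le B_{upper}$ is correct but redundant once the exact value has been computed.

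One small slip, immaterial to the proposition: you open by asserting that the noise term vanishes for $j-i \le r \le M-j$, but the paper's noise-free regime is $j-i < r \le M-j$; at the included boundary $r = j-i$ the noise term $\mathbf{R}(\mathbf{N},\mathcal{A}_{v},\mathbb{I}_{0},r)$ is still present. This does not affect your argument, since the proposition is a statement about the singular values of $\mathbf{T}_{i,j,r}$ alone, and the rank-one structure you exploit still holds at $r = j-i$.
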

\begin{proof}
    The proof of this proposition can be found in Appendix \ref{Proposition2}
\end{proof}
With varying $r$ and $|\rho|$, the {maximum} singular value could be smaller or larger than $1$. Thus the vCSLACC {for} $j-i < r \le M-j$ has either amplification or attenuation effects {on the} singular values of traditional CSL algorithms. 

For all previously discussed cases of $r$, an interesting question is how the amplification effect by vCSLACC changes with $r$. The following proposition gives us the answer.

\begin{proposition}
\label{propV}
    For $0 \le r \le M-j$, the amplification factor in statistical sense introduced by the vCSLACC over traditional CSL algorithms decreases with an increasing value of $r$.
\end{proposition}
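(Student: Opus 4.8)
The plan is to turn the statement into a monotonicity property of one structured matrix and then to control that matrix under unit shifts of $r$. By the Kronecker relation \eqref{VecYS1} together with the singular-value identities \eqref{VecEigRelation1} (for $0\le r\le j-i$) and \eqref{VecEigRelation} (for $j-i<r\le M-j$), the signal singular values of $\bar{\mathbf{R}}(\mathbf{Y},\mathcal{A}_{v},\mathbb{I}_{0},r)$ are those of $\mathbf{D}_{sa}=sv(\bar{\mathbf{R}}_{sa})$ scaled by $\mathbf{D}_{i,j,r}^{t}=sv(\mathbf{T}_{i,j,r})$; since $\bar{\mathbf{R}}_{sa}$ is independent of $r$, the amplification factor over traditional CSL algorithms is exactly the maximum singular value $\lambda_{max}(\mathbf{T}_{i,j,r})$. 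It therefore suffices to prove that $\lambda_{max}(\mathbf{T}_{i,j,r})$ is non-increasing on $0\le r\le M-j$. Using the exponential model \eqref{ECM}, I would first note that $[\mathbf{T}_{i,j,r}]_{a,b}$ has the factorized phase $|\rho|^{|r+b-a|}e^{i(r+b-a)\arg\rho}$ for $a,b\in\{1,\dots,N\}$, $N\triangleq j-i+1$, so that $\mathbf{T}_{i,j,r}$ equals the entrywise-nonnegative matrix $|\mathbf{T}_{i,j,r}|$ up to left and right multiplication by unitary diagonal matrices. Hence $sv(\mathbf{T}_{i,j,r})=sv(|\mathbf{T}_{i,j,r}|)$ and I may assume $[\mathbf{T}_{i,j,r}]_{a,b}=|\rho|^{|r+b-a|}\ge 0$ throughout.

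Working with $\mathbf{G}_r\triangleq\mathbf{T}_{i,j,r}^{H}\mathbf{T}_{i,j,r}$, so that $\lambda_{max}(\mathbf{T}_{i,j,r})^{2}=\lambda_{max}(\mathbf{G}_r)$, the central step is a telescoping identity obtained by shifting the row index in $[\mathbf{G}_r]_{b,b'}=\sum_{a=1}^{N}|\rho|^{|r+b-a|+|r+b'-a|}$ by one:
\begin{equation*}
[\mathbf{G}_{r+1}]_{b,b'}-[\mathbf{G}_{r}]_{b,b'}=|\rho|^{2r+b+b'}-|\rho|^{|r+b-N|+|r+b'-N|}.
\end{equation*}
In matrix form this reads $\mathbf{G}_{r+1}=\mathbf{G}_{r}+\mathbf{v}_r\mathbf{v}_r^{H}-\mathbf{w}_r\mathbf{w}_r^{H}$, where $[\mathbf{v}_r]_b=|\rho|^{r+b}$ and $[\mathbf{w}_r]_b=|\rho|^{|r+b-N|}$ are precisely the contributions of the dropped boundary row $a=0$ and the entering boundary row $a=N$. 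Thus each unit increment of $r$ perturbs the Gram matrix by the difference of two rank-one positive semidefinite terms, and the proof reduces to showing that the removed term $\mathbf{w}_r\mathbf{w}_r^{H}$ never allows $\lambda_{max}$ to grow.

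For the non-overlapping regime $j-i\le r\le M-j$ this is immediate. Here $r+b-N\ge 0$ for every $b$, so $[\mathbf{w}_r]_b=|\rho|^{r+b-N}=|\rho|^{-N}[\mathbf{v}_r]_b$, whence $\mathbf{w}_r\mathbf{w}_r^{H}=|\rho|^{-2N}\mathbf{v}_r\mathbf{v}_r^{H}\succeq\mathbf{v}_r\mathbf{v}_r^{H}$ and consequently $\mathbf{G}_{r+1}\preceq\mathbf{G}_{r}$, giving $\lambda_{max}(\mathbf{G}_{r+1})\le\lambda_{max}(\mathbf{G}_{r})$. (As a cross-check, this matches Proposition \ref{propCase3} and \eqref{lowerbound2}, whose $r$-dependence is the single non-increasing factor $|\rho|^{r+1}$.) The delicate case is the overlapping regime $0\le r\le j-i-1$, where $\mathbf{v}_r$ and $\mathbf{w}_r$ are no longer collinear: the added term $\mathbf{v}_r\mathbf{v}_r^{H}$ is peaked at the boundary index $b=1$, while the removed term $\mathbf{w}_r\mathbf{w}_r^{H}$ is peaked at the interior index $b=N-r$ (where $[\mathbf{w}_r]_b=1$), so no pointwise $\preceq$ ordering is available. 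I expect this to be the main obstacle.

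My plan to clear it is a Rayleigh-quotient argument anchored at the dominant eigenvector. Since $\mathbf{G}_r$ is entrywise nonnegative, its top eigenvector $\mathbf{x}^{\star}$ (taken for $\mathbf{G}_{r+1}$) may be chosen nonnegative by Perron--Frobenius, and the identity above yields
\begin{equation*}
\lambda_{max}(\mathbf{G}_{r})\ge(\mathbf{x}^{\star})^{H}\mathbf{G}_{r}\mathbf{x}^{\star}=\lambda_{max}(\mathbf{G}_{r+1})+|\mathbf{w}_r^{H}\mathbf{x}^{\star}|^{2}-|\mathbf{v}_r^{H}\mathbf{x}^{\star}|^{2}.
\end{equation*}
It thus remains to establish $\mathbf{w}_r^{H}\mathbf{x}^{\star}\ge\mathbf{v}_r^{H}\mathbf{x}^{\star}\ge 0$. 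This I would obtain from the localization of $\mathbf{x}^{\star}$: because the column of $\mathbf{T}_{i,j,r}$ indexed by $b$ retains its in-band unit peak only while $b\le N-r$, the nonnegative vector $\mathbf{x}^{\star}$ concentrates its mass on $\{1,\dots,N-r\}$, exactly the indices where $\mathbf{w}_r$ is large and $\mathbf{v}_r=|\rho|^{r+b}$ is small; combined with the norm comparisons $\|\mathbf{w}_r\|\ge 1$ and $\|\mathbf{w}_r\|\ge\|\mathbf{v}_r\|$ (both read off the closed forms of the two geometric sums), this forces the inner-product inequality. Since the resulting one-step inequality $\lambda_{max}(\mathbf{T}_{i,j,r+1})\le\lambda_{max}(\mathbf{T}_{i,j,r})$ then holds at every integer $r$ with $0\le r\le M-j-1$, the amplification factor decreases over the entire range $0\le r\le M-j$, as claimed.
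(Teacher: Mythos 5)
Your reduction of the claim to monotonicity of $\lambda_{max}(\mathbf{T}_{i,j,r})$, the phase factorization that lets you replace $\mathbf{T}_{i,j,r}$ by its entrywise modulus, the rank-one update identity $\mathbf{G}_{r+1}=\mathbf{G}_{r}+\mathbf{v}_r\mathbf{v}_r^{H}-\mathbf{w}_r\mathbf{w}_r^{H}$, and your treatment of the regime $j-i\le r\le M-j$ (where $\mathbf{w}_r=|\rho|^{-N}\mathbf{v}_r$ gives $\mathbf{G}_{r+1}\preceq\mathbf{G}_{r}$, $N=j-i+1$) are all correct and rigorous, modulo the trivial degenerate case $\rho=0$. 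This is a genuinely different route from the paper, which works with $\mathbf{T}_{s,r}$ in \eqref{subECM}, uses \eqref{aveEig} to show that $trace ( \mathbf{T}_{s,r}^{H} \mathbf{T}_{s,r} )$ decreases in $r$, and then passes from trace monotonicity to eigenvalue monotonicity by asserting positive definiteness of the difference of the two Gram matrices.

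The genuine gap is in the overlapping regime $0\le r\le j-i-1$, which is the crux of Proposition \ref{propV} (it is the only regime covering the noise-corrupted SCMs of \eqref{VecEigRelation1}). Your Rayleigh-quotient step correctly reduces everything to $\mathbf{w}_r^{H}\mathbf{x}^{\star}\ge\mathbf{v}_r^{H}\mathbf{x}^{\star}\ge 0$, but you never prove this inequality, and the ``localization'' heuristic offered in its place is both unsubstantiated and directionally unreliable. The entrywise difference $\mathbf{w}_r-\mathbf{v}_r$ is negative precisely at the small indices $b$ with $2(r+b)<N$, and the Perron vector of $\mathbf{G}_{r+1}$ can put its largest mass exactly there: for $N=3$, $r=0$, $|\rho|=0.9$ one computes $\mathbf{x}^{\star}\propto(1,\,0.97,\,0.87)$, peaked at $b=1$, where $[\mathbf{w}_0]_1=|\rho|^{2}<[\mathbf{v}_0]_1=|\rho|$. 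So the inequality, when true, holds by compensation from larger indices, and your norm facts $\Vert\mathbf{w}_r\Vert\ge 1$ and $\Vert\mathbf{w}_r\Vert\ge\Vert\mathbf{v}_r\Vert$ cannot force an inner-product ordering against a particular vector; that step is a non sequitur as written. Note also that your own identity shows $\mathbf{G}_{r}-\mathbf{G}_{r+1}=\mathbf{w}_r\mathbf{w}_r^{H}-\mathbf{v}_r\mathbf{v}_r^{H}$, which is indefinite whenever $\mathbf{v}_r$ and $\mathbf{w}_r$ are not collinear (test it against a vector orthogonal to $\mathbf{w}_r$ but not to $\mathbf{v}_r$), so no Loewner-ordering shortcut can close this case --- incidentally the same obstruction that makes the paper's own trace-to-positive-definiteness inference delicate in this regime. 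Until you supply an actual proof of the inner-product inequality (or a different argument for overlapping $r$), your proposal establishes the proposition only for $j-i\le r\le M-j$.
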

\begin{proof}
    The proof of this proposition can be found in Appendix \ref{Proposition3}.
\end{proof}

Discussion 2: Similar as in the matrix form algorithm in Section \ref{SectionIV}, the derived amplification bounds are also functions of spatial correlation and the number of deployed antennas, which provides us a way to analyze how such factors influence the singular values in statistical sense, as well as a guide to design an MIMO system meeting the discrimination standard of singular values. Besides, the amplification bounds can be used to determine the performance of spatially correlated MIMO systems (when an exact extreme singular value is unavailable \cite{Choi2014Bounds,Lim2017Bounds,Gong2019ImprovedUB}), as well as to decide the selection of the optimal subarray for wideband hybrid precoding \cite{Park2017Dynamic}.

\section{Simulation Results}
\label{SectionVII}
In this section, we perform simulations to verify the proposed CSL algorithms from the perspective of the singular value amplification and the WBSS performance.

\subsection{Simulation Settings}
\label{SectionVII-A}
In the simulations, one CR senses a $1$GHz baseband spectrum occupied by three active PUs ($K = 3$), except the simulations in Fig. \ref{SpasityCmp_sim}, in which we analyze the sensing performance for different numbers of PUs. The PU signals are modulated in BPSK signals, and each BPSK signal is transmitted with a unit power and a bandwidth of $20$MHz. The spatially correlated MIMO channel is realized by the Kronecker model with a randomly generated diagonal transmitting correlation matrix and a Toeplitz receiving correlation matrix $\mathbf{Q} = toeplitz(1,\rho, \cdots,\rho^{M-1})$. 
In the simulations, we will show how amplification factors change with different values of $|\rho|$ and $j-i+1$, respectively. Specifically, we vary $|\rho|$ from $0$ to $1.0$, setting $M = 6$, $i = 2$, $j = 3, 4$ for the mCSLSACC and $M = 10$, $i = 2$, $j = 4$ for the vCSLACC. Besides, we vary $j-i+1$ from $1$ to $10$ by fixing $i = 2$ and varying $j$ from $2$ to $11$, and the corresponding $M$ of mCSLSACC and vCSLACC are consequently $M = 11$ and $M = 21$, respectively. In the simulations of WBSS, all algorithms are shown with $M = 6$, $i = 2$, $j = 3$ and $|\rho| = 0.6$.
The random demodulator \cite{Tropp2009Beyond} is used as the sub-Nyquist sampling system and sensing performances under different compression ratios are shown. In addition, for all simulations, we focus on a common sensing period during which channel gains are fixed. Simulation results are averaged over $5000$ Monte Carlo realizations. For each realization, random PU signals, the MIMO channel and noise are generated.

\begin{figure}[t!]
	\centering
	\includegraphics[height=14.5cm, width=8.6cm]{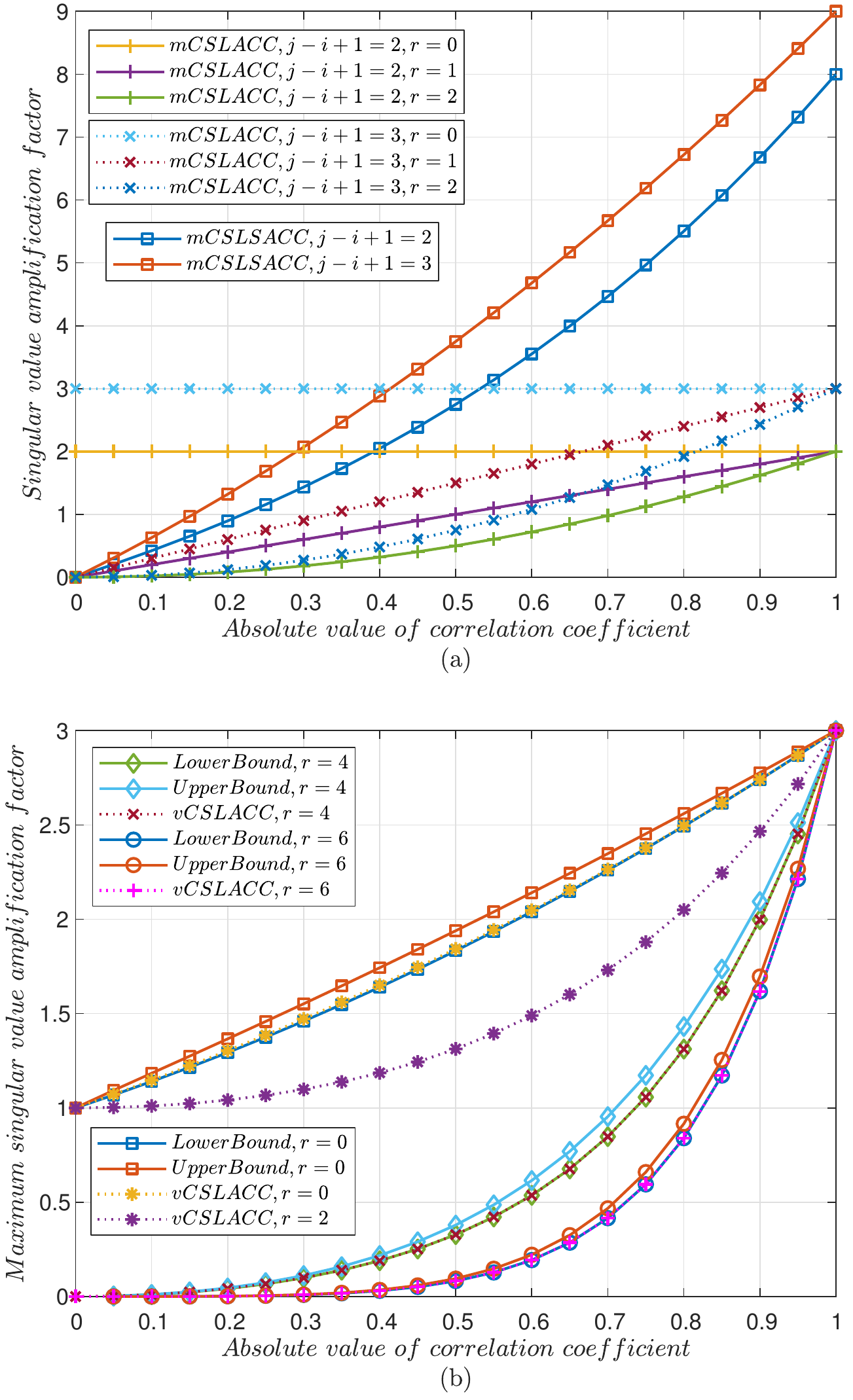}
	\vspace{-0.6em}
	\caption{The curves on how the amplification factor evolves with the absolute value of correlation coefficient $|\rho|$. (a) Comparison between mCSLACC and mCSLSACC. (b) The lower and upper bounds of vCSLACC.}
	\label{mCSLACC_vCSLACC_Gains}
\end{figure}

\begin{figure}[!t]
	\centering
	\includegraphics[height=14.5cm, width=8.6cm]{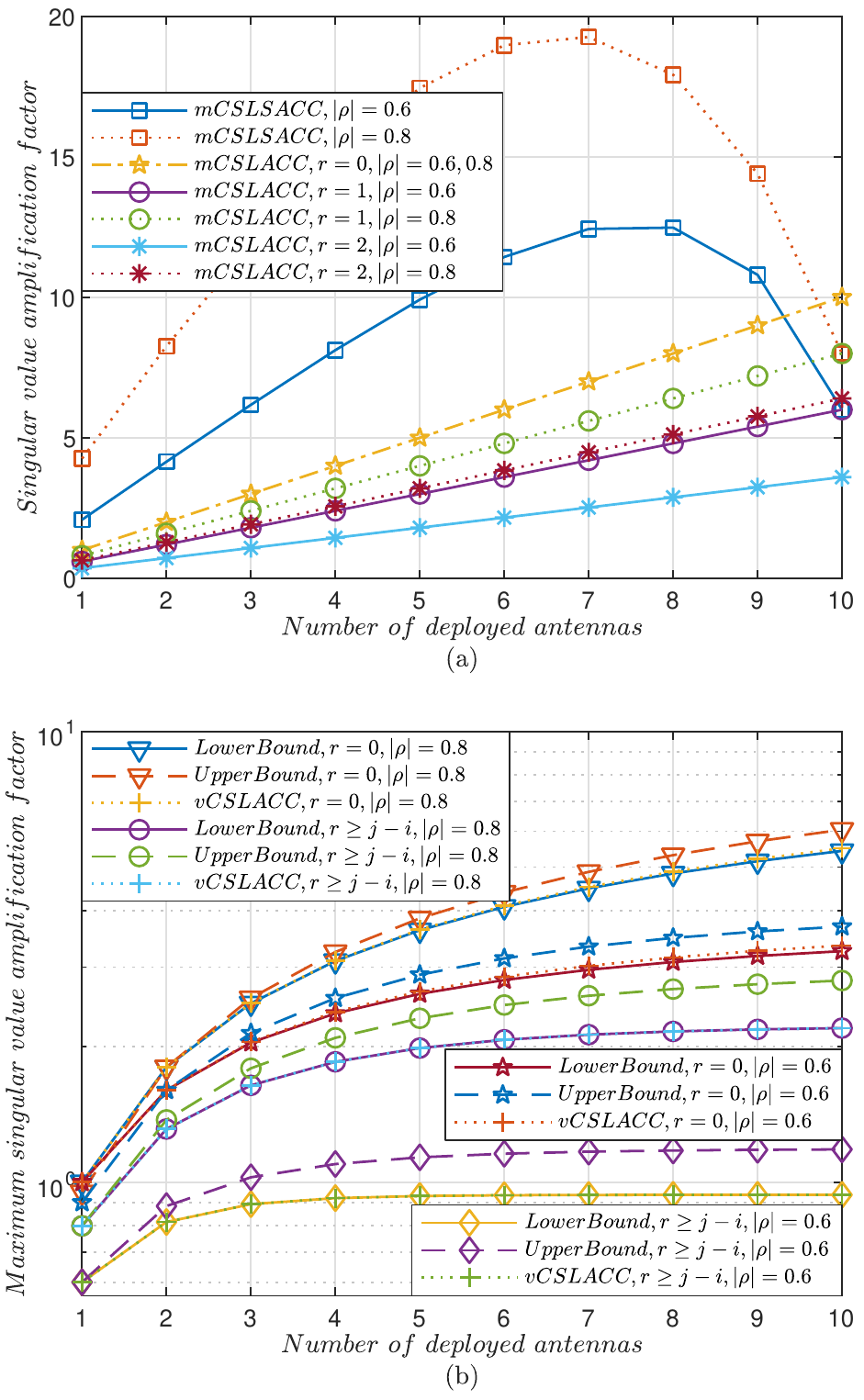}
	\vspace{-0.6em}
	\caption{The curves on how the amplification factor evolves with the number of deployed antennas $j-i+1$. (a) Comparison between mCSLACC and mCSLSACC. (b) The lower and upper bounds of vCSLACC.}
	\label{SVA2AN}
\end{figure}

\subsection{Singular Value Amplification}
\label{SectionVII-B}
The curves on how the amplification factor evolves with the absolute value of correlation coefficient $|\rho|$ are shown in Fig. \ref{mCSLACC_vCSLACC_Gains}(a) (for the mCSLSACC and its basic part mCSLACC) and Fig. \ref{mCSLACC_vCSLACC_Gains}(b) (for the vCSLACC). As seen from Fig. \ref{mCSLACC_vCSLACC_Gains}(a), for a fixed number of deployed antennas (i.e., $j-i+1 = 2, 3$), the amplification factor is attenuated by $|\rho|$ with $r$ varying from $0$ to $2$ and the {maximum} amplification factor is achieved at $r = 0$. It can also be seen that, for a given $r$, the large number of deployed antennas introduces a high amplification factor. In addition, the mCSLSACC yields a larger amplification factor than the basic mCSLACC and the amplification factor increases with $j-i+1$ and $|\rho|$.
In Fig. \ref{mCSLACC_vCSLACC_Gains}(b), the {maximum} amplification factors by vCSLACC with $r = 0,4,6$ are bounded by the corresponding upper and lower bounds. $r = 4,6$ satisfy the condition of the vCSLACC without noise corruption (i.e., $j-i < r \le M-j$), whose curves clearly show that the {maximum} amplification factors exactly reach the corresponding lower bounds. Moreover, we can also observe that the {maximum} amplification factors are monotonic decreasing with $r$.

The curves on how the amplification factor evolves with the number of deployed antennas are shown in Fig. \ref{SVA2AN}. Two values of $|\rho|$ (i.e., $0.6, 0.8$) are selected to denote the medium and high correlations, respectively. In Fig. \ref{SVA2AN}(a), it is shown that the amplification factors by mCSLACC monotonically increase with a growing number of deployed antennas. However, for the mCSLSACC, each amplification factor in terms of $|\rho|$ first increases to a peak point, after which the amplification factor monotonically decreases to the smallest value (i.e., the amplification factor of mCSLACC with $r = 1$). This phenomenon is due to the fact that the mCSLSACC finally reduces to the mCSLACC with $r = 1$ when the number of deployed antenna $j-i+1$ equals $M-1$. As seen from the two curves of mCSLSACC, the number of deployed antennas achieving the peak points tends to be small while $|\rho|$ increases. In Fig. \ref{SVA2AN}(b), the amplification factors by vCSLACC in different cases of $r$ all increase with the number of deployed antennas. In addition, the amplification factors in all cases of $r$ are bounded by the corresponding lower and upper bounds. Particularly, the amplification factors by the vCSLACC with $r \ge j - i$ exactly reach the lower bounds.

\subsection{Wideband Spectrum Sensing}
\label{SectionVII-C}
In this part, the proposed and traditional CSL algorithms are compared in terms of the WBSS performance. They are the mCSLSACC, the basic mCSLACC with $r = 0, 1$ and the vCSLACC with $r = 0, 1, 2$. Traditional CSL algorithms in the MIMO channel without considering the spatial correlations (referred to as the tmaCSL) and traditional CSL algorithms for single antenna scenario (termed as the tsaCSL) are selected as compared algorithms. The probability of detection is adopted as the sensing metric for comparisons. The probabilities of detection for different values of compression rations, number of PUs and SNR levels are displayed in Figs. \ref{CompRatioCmp_sim}, \ref{SpasityCmp_sim} and \ref{WBSS_sim}, respectively. 

\begin{figure}[!t]
	\centering
	\includegraphics[height=6.5cm, width=8.5cm]{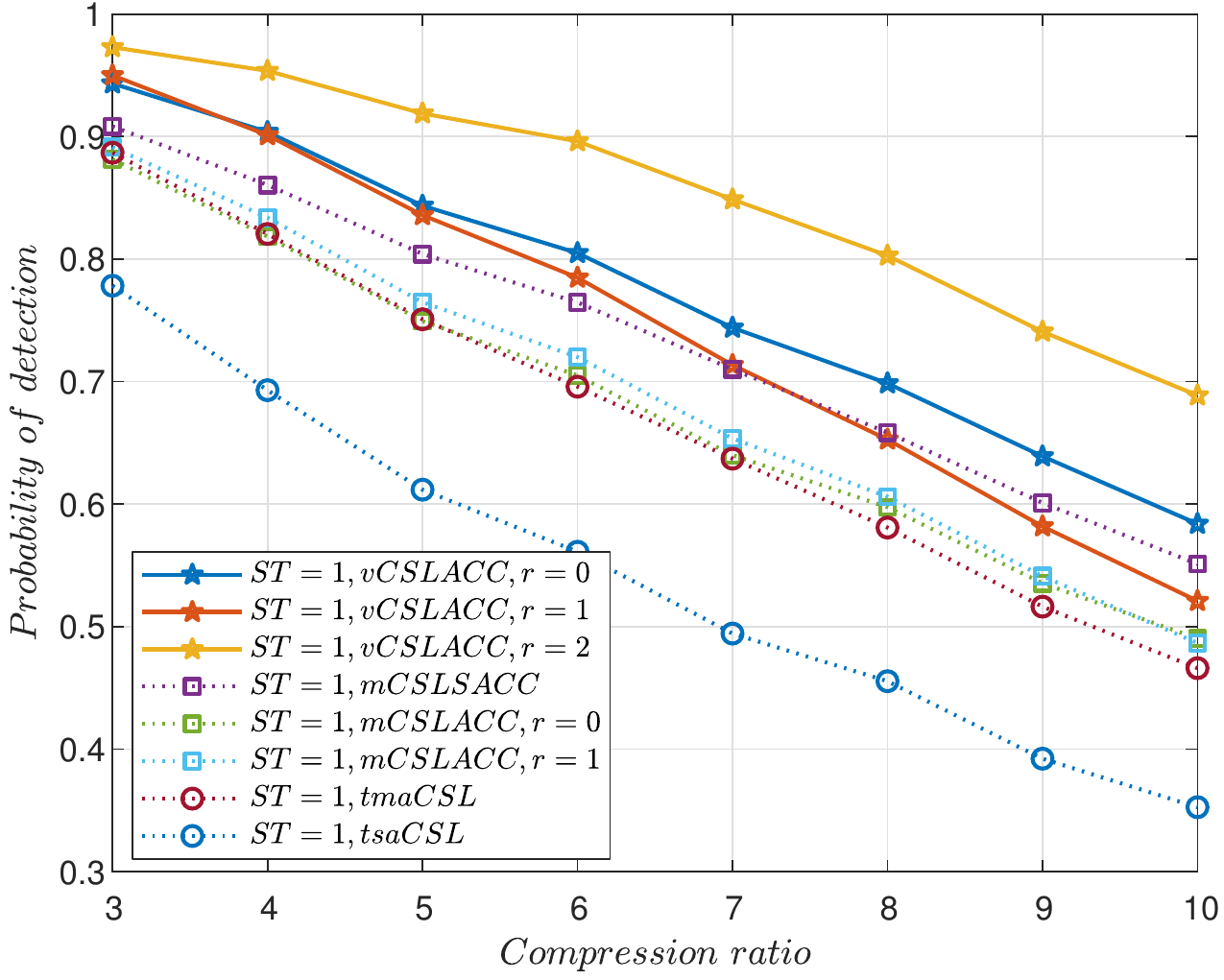}
	\vspace{-0.6em}
	\caption{Performance comparison of WBSS in terms of the compression ratio with $L = 100$ and $Q = 300$.}
	\label{CompRatioCmp_sim}
\end{figure}

In Fig. \ref{CompRatioCmp_sim}, the compression ratio varies from 3 to 10, in other words, the sub-Nyquist sampling rate varies from $1/10$ to $1/3$ of the Nyquist sampling rate. In this simulation, the SNRs of all algorithms are set as $-16$dB and the sensing period is fixed to one unit, i.e., ST = $1$. In Fig. \ref{CompRatioCmp_sim}, large compression ratio degrades the probability of detection, and the proposed vCSLACC, mCSLSACC and the basic mCSLACC achieve higher probabilities of detection than compared algorithms (tmaCSL and tsaCSL). Among the proposed algorithms, the CSL algorithms without noise corruption (i.e., vCSLACC with $r = 2$ and mCSLACC with $r = 1$) outperform those corrupted by noise (i.e., vCSLACC with $r = 0, 1$ and mCSLACC with $r = 0$). The mCSLSACC exploiting the sum of basic mCSLACCs without noise corruption shows the performance improvement over the mCSLACC with $r = 1$.

In Fig. \ref{SpasityCmp_sim}, the number of PUs varies from $4$ to $32$. The SNRs of all algorithms are set as $-18$dB and the sensing period is ST = $1$. It is clearly shown in Fig. \ref{SpasityCmp_sim} that the probabilities of detection of all algorithms decrease when the number of PUs increases and the sensing performances of the proposed algorithms are better than those of traditional ones. In addition, the vCSLACC with $r = 2$, the mCSLACC with $r = 1$ and the mCSLSACC (all are free of noise corruption) are superior to the corresponding proposed algorithms in noise corrupted cases.

\begin{figure}[!t]
	\centering
	\includegraphics[height=6.5cm, width=8.5cm]{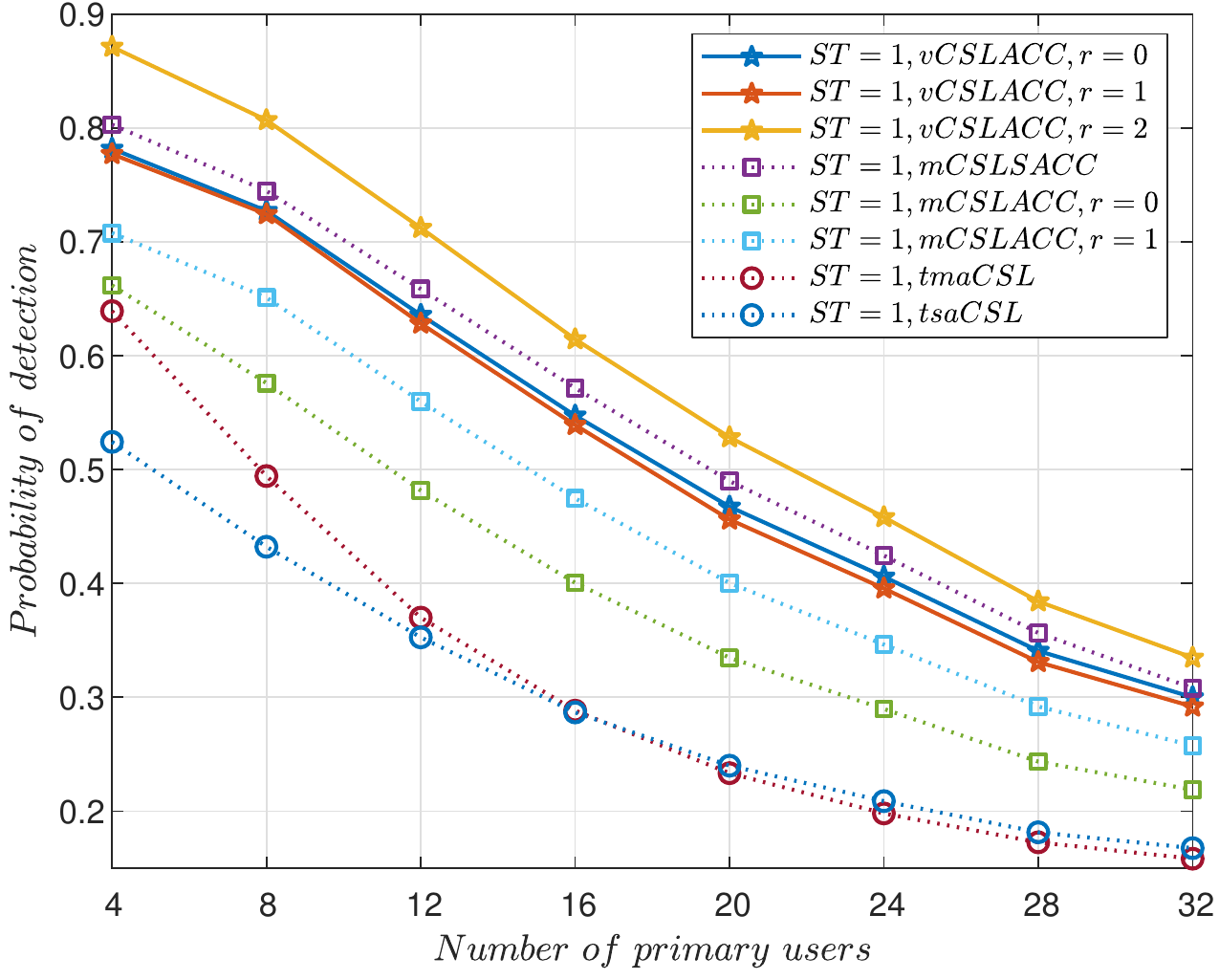}
	\vspace{-0.6em}
	\caption{Performance comparison of WBSS in terms of the PU numbers with $L = 100$, $P = 100$ and $Q = 500$.}
	\label{SpasityCmp_sim}
\end{figure}

\begin{figure}[!t]
	\centering
	\includegraphics[height=6.5cm, width=8.5cm]{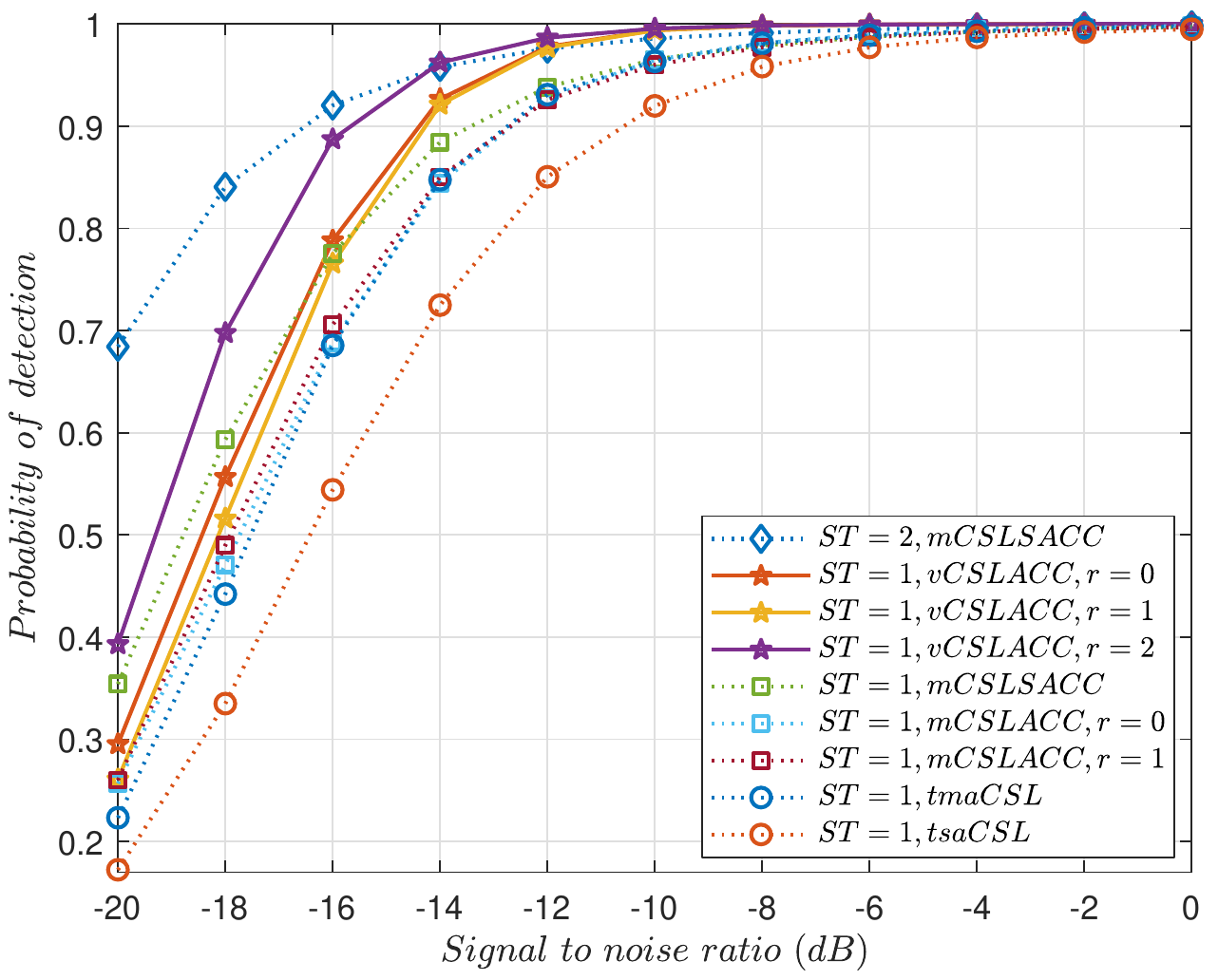}
	\vspace{-0.6em}
	\caption{Performance comparison of WBSS with $L = 100$, $P = 100$, $Q = 200$ (ST = $1$) and with $L = 100$, $P = 200$, $Q = 400$ (ST = $2$).}
	\label{WBSS_sim}
\end{figure}

In Fig. \ref{WBSS_sim}, the SNR varies from $-20$dB to $0$dB. 
Among all solid-line curves, the vCSLACC with $r = 0$ and $r = 1$ corrupted by noise achieve lower detection probabilities than the vCSLACC with $r = 2$ (without noise corruption). Among all dash circle lines in the condition of ST = 1 (or ST = 2), the mCSLSACC performs better than both the mCSLACC with $r = 1$ (without noise corruption) and the mCSLACC with $r = 0$ (corrupted by noise). 
In addition, for ST = $1$, both noise corrupted and noise free cases of the vCSLACC outperform the corresponding noise corrupted and noise free cases of the mCSLACC. We increase the sensing period of the mCSLSACC and the mCSLACC (i.e., ST = $2$) in order to keep a unified SCM dimension with the vCSLACC (i.e., ST = $1$).
Then it can be seen that the mCSLSACC and the mCSLACC outperform the vCSLACC with $r = 2$ in low SNR scenarios (below $-14$dB/$-16$dB for the mCSLSACC/mCSLACC). In contrast, in high SNR scenarios, the vCSLACC shows its advantage over the mCSLSACC and the mCSLACC.
Finally, compared with tmaCSL and tsaCSL, the proposed CSL algorithms achieve higher probabilities of detection.

\section{Conclusion}
\label{SectionVIII}
In this paper, the antenna cross-correlations and the MIMO spatial correlation have been introduced to promote the compressive subspace learning. To this end, two CSL algorithms (mCSLSACC and vCSLACC) have been proposed.
Based on the SCM statistical relation between the proposed and traditional CSL algorithms, space diversity and shift factor of antennas have been analyzed for both proposed algorithms based on the Kronecker channel model. By further using the exponential correlation model as the receiving correlation matrix, useful closed form expressions and conclusions for proposed algorithms have been derived.
Simulations have verified the derived results and shown performance improvements in WBSS over traditional CSL algorithms.


%
\appendices

\section{Proof of Theorem 1}
\label{Theorem1}
    Assuming that three antennas labeled by $m_1$, $m_2$ and $m_3$ are with equally separated spaces, we have $\rho_{m_1 m_2} = \rho_{m_2 m_3}$. With the symmetric property of antenna correlations, we obtain that $\left| \sum_{u \in \mathbb{I}_{0}} {\mathbf{q}_{u}^{H} \mathbf{q}_{u-r}} \right| = \left| \sum_{u \in \mathbb{I}_{0}} {\mathbf{q}_{u}^{H} \mathbf{q}_{u+r}} \right|$. Therefore, the amplification factor in \eqref{svGain1} is obtained by plugging \eqref{svGain} into $|G|$ in \eqref{MatYS3} and then calculating a geometric sequence.

\section{Proof of Proposition 1}
\label{Proposition1}
    In the case of $r = 0$, the matrix $\mathbf{T}_{i,j,r}$ is also an exponential correlation matrix given by a sub-matrix of \eqref{ECM} with both columns and rows indexed by $\mathbb{I}_{0} = \{i,i+1,\cdots,j\}$. Since the {maximum} singular value and the {maximum} eigenvalue of $\mathbf{T}_{i,j,r}$ are equal, the lower bound in \eqref{lowerbound1} and the upper bound in \eqref{upperbound1} on the {maximum} singular value of $\mathbf{T}_{i,j,r}$ directly follow the results in \cite{Choi2014Bounds,Park2017Dynamic} and \cite{Lim2017Bounds}, respectively.

\section{Proof of Theorem 2}
\label{Theorem3}
    As seen from the structure of $\mathbf{T}_{s,r}$ in \eqref{subECM}, we know that, between any two different columns, the first $s+1$ columns of $\mathbf{T}_{s,r}$ are linearly independent but the last $r+1$ columns are linearly dependent. Therefore, the number of eigenvalues of $\mathbf{T}_{s,r}^{H} \mathbf{T}_{s,r}$ is $s+1$, leading to an average value of all eigenvalues as 
    \begin{flalign}
	\nonumber
	\bar{C} 
	&\triangleq \frac{1}{s+1} trace \left( \mathbf{T}_{s,r}^{H} \mathbf{T}_{s,r} \right) \\
	\label{aveEig}
	&= \frac{1}{s+1} \left( \sum\limits_{k=1}^{s+1} \mathbf{t}_{k}^{H} \mathbf{t}_{k} + \sum\limits_{k=1}^{r} |\rho|^{2k} \mathbf{t}_{s+1}^{H} \mathbf{t}_{s+1} \right).
	\end{flalign}
    Afterwards, we have the following inequalities
    \begin{flalign}
	\label{inequality1_1}
	\bar{C}
	\ge \frac{1}{s+1} \sum\limits_{k=1}^{s+1} \mathbf{t}_{k}^{H} \mathbf{t}_{k}
	\ge 1,
	\end{flalign}
    where the equalities of \eqref{inequality1_1} hold if and only if $\rho = 0$. Finally, we conclude \eqref{inequality1} due to the fact that $\lambda_{max} \ge \sqrt{\bar{C}}$.

\section{Proof of Theorem 3}
\label{Theorem4}
    By using the rank-$1$ $\mathbf{T}_{i,j,r}$ in \eqref{subECM1}, we know that the {maximum} singular value of $\mathbf{T}_{i,j,r}$ is larger than the {maximum} square root of diagonal elements of $\mathbf{T}_{i,j,r}^{H} \mathbf{T}_{i,j,r}$ as the following inequality
	\begin{flalign}
	\label{inequality2}
	\lambda_{max} 
	\ge \sqrt{ \sum\limits_{k = 0}^{j-i} |\rho|^{2k} \mathbf{t}_{1}^{H} \mathbf{t}_{1} } 
	\ge \max_{0 \le k \le j-i} \sqrt{ |\rho|^{2k} \mathbf{t}_{1}^{H} \mathbf{t}_{1} }.
	\end{flalign}
    Moreover, combining \eqref{inequality2} with the Gershgorin's circle theorem \cite{George2007Handbook}, the {maximum} singular value is upper bounded by the square root of the {maximum} row sum of a matrix with its elements as $\left| [\mathbf{T}_{i,j,r}^{H} \mathbf{T}_{i,j,r}]_{m,n} \right| (1 \le m, n \le j-i+1)$. We derive the upper bound as
	\begin{flalign}
	\nonumber
	B_{upper} 
	&= \sqrt{\sum\limits_{k = 0}^{j-i} |\rho|^{k} \mathbf{t}_{1}^{H} \mathbf{t}_{1}} 
	= \sqrt{\sum\limits_{k = 0}^{j-i} |\rho|^{k} \sum\limits_{l = r-(j-i)}^{r} |\rho|^{2l}} \\
	\label{upperboundderive}
	&= \frac{ |\rho|^{r+1} (1-|\rho|^{j-i+1}) }{ |\rho|^{j-i+1} (1-|\rho|) } \sqrt{ \frac{1+|\rho|^{j-i+1}}{1+|\rho|} }.
	\end{flalign}
    In addition, the {maximum} singular value of $\mathbf{T}_{i,j,r}$ is lower bounded by the square root of Rayleigh quotient $\sqrt{ \frac{\mathbf{u}^{H} \mathbf{T}_{i,j,r}^{H} \mathbf{T}_{i,j,r} \mathbf{u}}{\mathbf{u}^{H} \mathbf{u}} }$ for any nonzero $\mathbf{u} \in \mathbb{C}^{(j-i+1) \times 1}$. Let $\mathbf{u} = \frac{1}{\sqrt{j-i+1}}[1, \rho, \cdots, \rho^{j-i}]^{T}$ be the non-zero vector, then the lower bound on {maximum} singular value of $\mathbf{T}_{i,j,r}$ is derived as
	\begin{flalign}
	\nonumber
	B_{lower} 
	&= \sqrt{ \frac{\mathbf{u}^{H} \mathbf{T}_{i,j,r}^{H} \mathbf{T}_{i,j,r} \mathbf{u}}{\mathbf{u}^{H} \mathbf{u}} } \\
	\nonumber
	&= \sqrt{ \frac{\mathbf{t}_{1}^{H} \mathbf{t}_{1}}{\mathbf{u}_{1}^{H} \mathbf{u}_{1}} \left( \mathbf{u}_{1}^{H} \mathbf{u}_{1} \sum\limits_{k = 0}^{j-i} |\rho|^{2k} \right)} \\
	\label{EQ_LowerBound}
	&= \frac{|\rho|^{r+1} (1-|\rho|^{j-i+1})(1+|\rho|^{j-i+1})}{|\rho|^{j-i+1}(1-|\rho|)(1+|\rho|)}.
	\end{flalign}

\section{Proof of Proposition 2}
\label{Proposition2}
    We first prove that the nonzero vector $\mathbf{u} = \frac{1}{\sqrt{j-i+1}}[1, \rho, \cdots, \rho^{j-i}]^{T}$ used to derive the lower bound in Appendix \ref{Theorem4} is the eigenvector of $\mathbf{T}_{i,j,r}^{H} \mathbf{T}_{i,j,r}$ through the following equation
	\begin{flalign}
	\nonumber
	\mathbf{T}_{i,j,r}^{H} \mathbf{T}_{i,j,r} \mathbf{u}
	&= \sqrt{j-i+1} \left( \mathbf{t}_{1}^{H} \mathbf{t}_{1} \right) 
	\begin{bmatrix}
	\mathbf{u} &\rho^{*}\mathbf{u} &\cdots & (\rho^{*})^{j-i}\mathbf{u}
	\end{bmatrix} \mathbf{u} \\
	\label{eigenvector}
	&= (j-i+1) \left( \mathbf{t}_{1}^{H} \mathbf{t}_{1} \right)  \left( \mathbf{u}^{H} \mathbf{u} \right) \mathbf{u},
	\end{flalign}
    where the corresponding eigenvalue of $\mathbf{T}_{i,j,r}^{H} \mathbf{T}_{i,j,r}$ is given by 
    \begin{flalign}
	\label{eigenvalue}
	\lambda 
	= (j-i+1) \left( \mathbf{t}_{1}^{H} \mathbf{t}_{1} \right)  \left( \mathbf{u}^{H} \mathbf{u} \right) 
	= \left( \mathbf{t}_{1}^{H} \mathbf{t}_{1} \right) \sum_{k=0}^{j-i} |\rho|^{2k}.
	\end{flalign}
    Based on \eqref{EQ_LowerBound} in Appendix \ref{Theorem4} and \eqref{eigenvalue}, $\lambda = B_{lower}^{2}$ is straightforward and the conclusion of Proposition \ref{propCase3} is proved.

\section{Proof of Proposition 3}
\label{Proposition3}
    Without loss of generality, we take $\mathbf{T}_{s,r}$ in \eqref{subECM} into consideration because both $\mathbf{T}_{i,j,r}$ with $r=0$ and $\mathbf{T}_{i,j,r}$ with $j-i < r \le M-j$ can be considered as two special cases of $\mathbf{T}_{s,r}$. Based on \eqref{subECM} and \eqref{aveEig}, the trace of $\mathbf{T}_{s,r}^{H} \mathbf{T}_{s,r}$ is given by $\bar{C}(s+1)$,
    which decreases monotonically with $r$. Defining two different variables $r_{1}$, $r_{2}$ $(r_{1} < r_{2})$ and applying the monotonic property of $\bar{C}(s+1)$, we get that the matrix $\mathbf{T}_{s,r_{1}}^{H} \mathbf{T}_{s,r_{1}} - \mathbf{T}_{s,r_{2}}^{H} \mathbf{T}_{s,r_{2}}$ is Hermitian and positive definite because $trace(\mathbf{T}_{s,r_{1}}^{H} \mathbf{T}_{s,r_{1}} - \mathbf{T}_{s,r_{2}}^{H} \mathbf{T}_{s,r_{2}}) > 0$. As a result, the {maximum} eigenvalue (singular value) of $\mathbf{T}_{s,r_{1}}^{H} \mathbf{T}_{s,r_{1}}$ $(\mathbf{T}_{s,r_{1}})$ is larger than that of $\mathbf{T}_{s,r_{2}}^{H} \mathbf{T}_{s,r_{2}}$ $(\mathbf{T}_{s,r_{2}})$ \cite{George2007Handbook}. Therefore, we can make a conclusion that the {maximum} eigenvalue (singular value) of $\mathbf{T}_{s,r}^{H} \mathbf{T}_{s,r}$ $(\mathbf{T}_{s,r})$ decreases with an increasing value of $r$, and the amplification effect introduced by vCSLACC over traditional CSL algorithms diminishes along with an increasing value of $r$.

\bibliographystyle{IEEEtran}
\bibliography{IEEEabrv,references} 

\end{document}